\begin{document}
\title{Computational Topology for Approximations of Knots}{Computational Topology for Approximations of Knots}
\author{J. Li \and T. J. Peters \and K. E. Jordan}{}

\address[ji.li@uconn.edu]{J. Li}{Department of Mathematics, University of Connecticut, Storrs, CT, USA.}
\address[tpeters@cse.uconn.edu]{T. J. Peters}{Department of Computer Science and Engineering,
        University of Connecticut, Storrs, CT, USA.} 
\address[kjordan@us.ibm.com]{K. E. Jordan}{IBM T.J. Watson Research, Cambridge Research Center, Cambridge, MA, USA.}

\date{\today}

\begin{abstract} 
The preservation of ambient isotopic equivalence under piecewise linear (PL) approximation for smooth knots are prominent in molecular modeling and simulation. Sufficient conditions are given regarding:

\begin{enumerate}
\item Hausdorff distance, and
\item a sum of total curvature and derivative. 
\end{enumerate}

High degree B\'ezier curves are often used as smooth representations, where computational efficiency is a practical concern. Subdivision can produce PL approximations for a given B\'ezier curve, fulfilling the above two conditions. The primary contributions are:
\begin{enumerate}[(i)]
\item \emph{a priori} bounds on the number of subdivision iterations sufficient to achieve a PL approximation that is ambient isotopic to the original B\'ezier curve, and
\item improved iteration bounds over those previously established.
\end{enumerate}
\end{abstract}

\section{Introduction}
\label{sec:intro}

\begin{figure}[h!]
\centering
    \subfigure[Unknot VS. Knot]
    {
   \includegraphics[height=2.7cm]{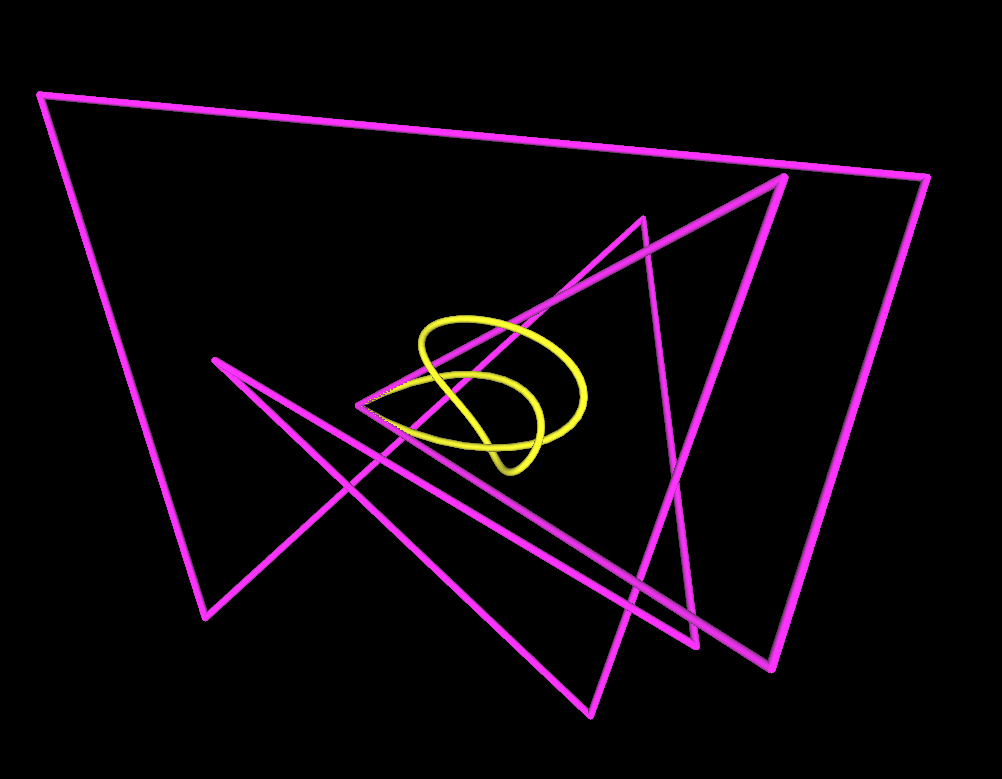}
    \label{fig:ku0}
    }
    \subfigure[An intermediate step]
    {
   \includegraphics[height=2.7cm]{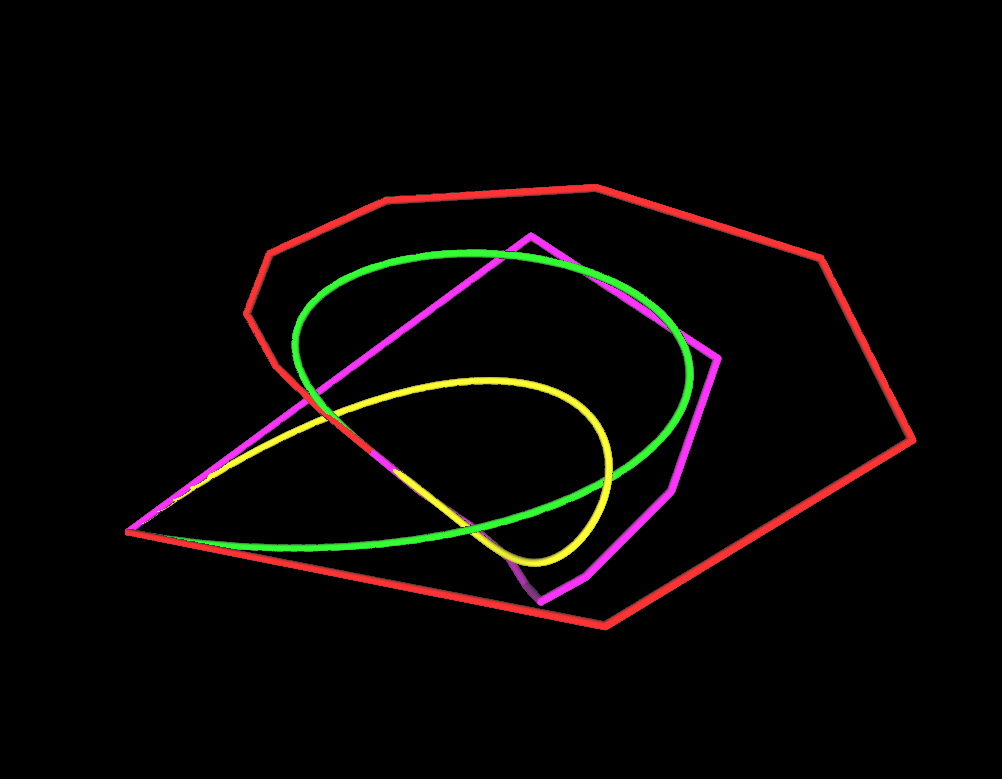}
    \label{fig:ku1}
    }
    \subfigure[Knot VS. Knot]
    {
   \includegraphics[height=2.7cm]{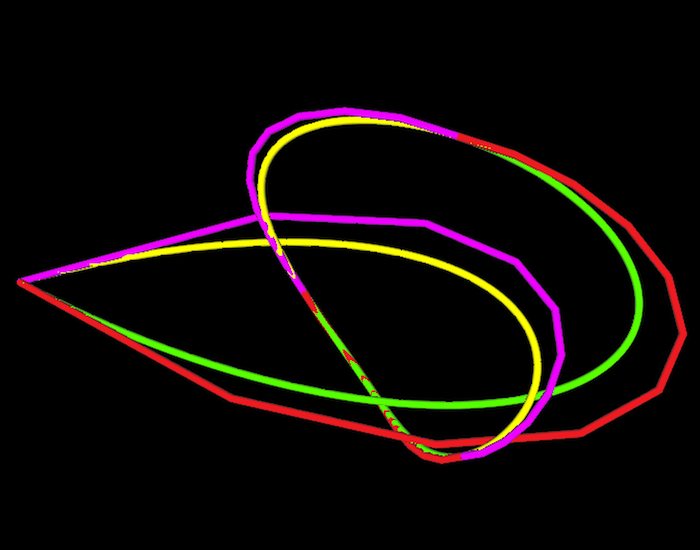}
    \label{fig:ku2}
    }
    \caption{Ambient isotopic approximation}
    \label{fig:aa}
\end{figure}

Figure~\ref{fig:ku0} demonstrates an example of topological difference, where a knotted B\'ezier curve is defined by an unknotted control polygon \cite{JL2012}. Subdivision is then used to generate new control polygons. Figure~\ref{fig:ku1} shows the control polygon after one subdivision, where the topological difference remains. Figure~\ref{fig:ku2} shows the control polygon after two subdivisions, where the control polygon obtains the same topology as the underlying curve. 

The images are illustrative and a curve visualization tool \cite{TJPweb} was used to experimentally create these examples.  Rigorous proofs of the topological difference between the B\'ezier curve and its initial control polygon were formulated \cite[Section 2]{JL2012}. This serves as a cautionary note that graphics used to approximate a curve may not have isotopic equivalence. Additional rigorous topological analysis is important, as described here. Figure~\ref{fig:ku1} and~\ref{fig:ku2} are visual examples that show successive subdivisions eventually produce topologically correct PL approximations. The advantage of the bounds given here are discussed in Remark~\ref{re:better}.

\subsection{Topological background}
\label{ssec:topback}

There is contemporary interest \cite{Amenta2003, L.-E.Andersson2000, ChoMaekawa1996, Lance2009, Moore_Peters_Roulier2007} to preserve topological characteristics such as homeomorphism and ambient isotopy between an initial geometric model and its approximation. Ambient isotopy is a continuous family of homeomorphisms $H: X \times [0,1] \rightarrow Y$ such that 
$$H(X, 0)=X\ \textup{and}\ H(X, 1)=Y,$$ 
for topological spaces $X$ and $Y$ \cite{Hirsch}. It is particularly applicable for time varying models, such as the writhing of molecules. 

A B\'ezier curve is characterized by an indexed set of  points, which forms a piecewise linear ($PL$) approximation of the curve, called a control polygon. The de Casteljau algorithm \cite{G.Farin1990} is a subdivision algorithm associated to B\'ezier curves which recursively generates control polygons more closely approximating the curve under Hausdorff distance \cite{J.Munkres1999, Nairn-Peters-Lutterkort1999}. 

An earlier algorithm \cite{TJP2011} establishes an isotopic approximation over a broad class of parametric geometry, but can not provide the number of subdivision iterations for B\'ezier curves. Other recent papers \cite{Burr2012, LineYap2011} present algorithms to compute isotopic PL approximation for $2D$ algebraic curves.  Computational techniques for establishing isotopy and homotopy have been established regarding algorithms for point-cloud  by ``distance-like functions'' \cite{Chazal2005}. Ambient isotopy under subdivision was previously established \cite{Moore_Peters_Roulier2007} for $3D$ B\'ezier curves of low degree (less than 4).

Recent progress regarding isotopy under certain convergence criteria has been made \cite{DenneSullivan2008, JL-isoconvthm, bez-iso}. In particular, Denne and Sullivan proved that for homeomorphic curves, if their distance and angles between the first derivatives are within some given bounds, then these curves are ambient isotopic \cite{DenneSullivan2008}. This result has been applied to B\'ezier curves \cite{bez-iso}. Here we present an alternative set of conditions for ambient isotopy that is explicitly constructed. It is useful for applications that require explicit maps between initial and terminal configurations. Remark~\ref{rmk:rbd} will show that there is no need to test first derivatives. Instead, we test global conditions of distance and total curvature. It may also be useful when the conditions here are easier to be verified than those in the previously established method. Furthermore, the subdivision iteration bound established here is an improvement over the previous one (Remark~\ref{re:better}).

Moreover, this is alternative to a result regarding existence of ambient isotopy for B\'ezier curves \cite{JiLi}. The pure existence proof requires the convex hulls of sub-control polygons to be contained in a tubular neighborhood determined by a pipe surface and may need more subdivision iterations and produce too many $PL$ segments. The work here removes this convex hull constraint and produces the isotopy using fewer subdivision iterations.

A technique we will use is called {\em pipe surface} \cite{Maekawa_Patrikalakis_Sakkalis_Yu1998}. A pipe surface of radius $r$ of a curve $c(t)$, where $t \in [0,1]$ is given by
\[ \textbf{ p}(t,\theta) = c(t) + r[cos(\theta) \textbf{ n}(t) + 
sin(\theta) \textbf{ b}(t) ], \]
where $\theta \in [0,2\pi]$ and $\textbf{n}(t)$ and $\textbf{b}(t)$ are, respectively, the normal and bi-normal vectors at the point $c(t)$, 
as given by the Frenet-Serret trihedron.

\begin{rem}
The paper \cite{Maekawa_Patrikalakis_Sakkalis_Yu1998} provides the computation of the radius $r$ only for rational spline curves. However, the method of computing $r$ is similar for other  compact, regular, $C^2$, and simple curves, that is, taking the minimum of $1/\kappa_{max}$, $d_{min}$, and $r_{end}$, where $\kappa_{max}$ is the maximum of the curvatures, $d_{min}$ is the minimum separation distance, and $r_{end}$ is the maximal radius around the end points that does not yield self-intersections.
\end{rem}

Pipe surfaces have been studied since the 19th century \cite{Monge}, but the presentation here follows a contemporary source~\cite{Maekawa_Patrikalakis_Sakkalis_Yu1998}. These authors perform a thorough analysis and description of the end conditions of open spline curves.  The junction points of a B\'ezier curve are merely a special case of that analysis. 

We shall state the conditions. We assume throughout this paper that the space curves are parametric, compact, simple (non-self-intersecting) and regular (The first derivatives never vanish). Given two curves, $PL$ and smooth respectively (Usually, the $PL$ curve is an approximation of the smooth curve.), suppose that they are divided into sub-curves. Let $L(t) : [0,1] \rightarrow \mathbb{R}^3$ and  $C(t) : [0,1] \rightarrow \mathbb{R}^3$ be the corresponding $PL$ and smooth sub-curves. We require that $L(0)=C(0)$ and $L(1)=C(1)$. In particular, for a B\'ezier curve, subdivision produces sub-control polygons and the corresponding smooth sub-curves such that each pair of end points between the $PL$ and smooth sub-curves are connected. 

There exists a nonsingular pipe surface of radius $r$ for $C$ \cite{Maekawa_Patrikalakis_Sakkalis_Yu1998}. Denote the disc of radius $r$ centered at $C(t)$ and normal to $C$ as $D_r(t)$. Let a {\em pipe section} to be $\Gamma=\bigcup_{t\in[0,1]}D_r(t)$. Denote the interior as int$(\Gamma)$, and the boundary as $\partial \Gamma$. Note that the boundary $\partial \Gamma$ consists of the nonsingular pipe surface and the end discs $D_r(0)$ and $D_r(1)$. Define $\theta(t): [0,1] \rightarrow [0,\pi]$ by 
$$\theta(t) = \eta(C'(t),L'(t)),$$
where the function $\eta(\cdot,\cdot)$ denotes the angle between two vectors \cite{bez-iso}.

\subsection{Our two conditions}
\label{ssec:our2}

The two primary conditions for this paper are now stated.

\vspace{2ex}
\noindent \textbf {\textit{Conditions 1 \textup{and} 2} } for ambient isotopy are:
\begin{enumerate}
\item $L\setminus \{L(0), L(1)\} \subset \textup{int}(\Gamma)$; and
\item $T_{\kappa}(L) + \max_{t\in[0,1]} \theta(t)< \frac{\pi}{2}$,
\end{enumerate}
where $\Gamma$ is the pipe section of $C$ and $T_{\kappa}(L)$ denotes the total curvature of $L$, i. e. the sum of exterior angles \cite{bez-iso}.

\textit{Conditions 1 \textup{and} 2} will guarantee ambient isotopy between not only the sub-curves $L$ and $C$, but also the whole curves, which is more important.   

\begin{rem}\label{rmk:rbd}
We shall show later that, for a B\'ezier curve, the number of subdivisions for \textit{Condition 2} is at most one more than that for a weaker condition $T_{\kappa}(L) <\frac{\pi}{2}$ (Lemma~\ref{lem:c2wf} in Section~\ref{ssec:nchns}). This allows us to easily remove the burden of testifying the derivatives in order to find $\theta(t)$. 
\end{rem}

\section{Construction of Homeomorphisms}\label{ssec:dah}

Constructing the ambient isotopy here relies upon explicitly constructing a homeomorphism. The explicit construction provides more algorithmic efficiency than only showing the existence of  these equivalence relations.

\begin{lem}\label{lem:hp12}
Suppose $L$ is a sub-control polygon and $C$ is the corresponding B\'ezier sub-curve. Then \textit{Conditions 1 \textup{and} 2} can be achieved by subdivision.
\end{lem}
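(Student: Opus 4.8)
The plan is to establish the two conditions separately, using the standard convergence properties of de Casteljau subdivision together with the hypothesis that $C$ is compact, regular, simple and $C^2$ (so that it carries a nonsingular pipe surface of some radius $r>0$). The key quantitative facts are: (a) after $k$ subdivisions, the Hausdorff distance between each sub-control polygon and its sub-curve decays like $O(2^{-2k})$, with constant controlled by the second derivative bound of $C$ (see \cite{Nairn-Peters-Lutterkort1999}); (b) the total curvature $T_{\kappa}(L)$ of the sub-control polygon is bounded by the total curvature of the corresponding sub-arc of $C$, which is $\int |\kappa|\,ds$ over that sub-arc, and this tends to $0$ as the sub-arcs shrink since $C$ has bounded curvature; and (c) the tangent directions of the control-polygon edges converge to the tangent $C'(t)$ uniformly, so $\max_t \theta(t)\to 0$.

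\emph{Step 1 (Condition 1).} Fix the pipe radius $r>0$ for $C$. Choose $k$ large enough that the Hausdorff distance between every sub-control polygon $L$ (after $k$ subdivisions) and the corresponding sub-arc $C$ is less than some $\delta<r$; this is possible by fact (a). Since $L$ and $C$ share endpoints and $L$ lies within distance $\delta$ of $C$, every interior point of $L$ lies in the open pipe section $\mathrm{int}(\Gamma)$ of $C$, provided $\delta$ is small relative to $r$ and the pipe section of each sub-arc is correctly nested inside that of $C$. One technical point to handle carefully is the behavior near the shared endpoints $L(0)=C(0)$ and $L(1)=C(1)$: there the disc $D_r(0)$ is part of $\partial\Gamma$, not the interior, so I must argue that an interior point of $L$ near $L(0)$ still has a \emph{positive} component along $C'(0)$ and hence lies strictly inside $\Gamma$. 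This follows because the edge of $L$ leaving $L(0)$ has direction converging to $C'(0)$, so for $k$ large the initial edge enters $\mathrm{int}(\Gamma)$ immediately; symmetrically at $L(1)$.

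\emph{Step 2 (Condition 2).} By fact (b), $T_{\kappa}(L)$ is at most the total curvature of the corresponding sub-arc of $C$; since $C$ has bounded curvature and the sub-arcs have length $O(2^{-k})$, we get $T_{\kappa}(L)\to 0$ uniformly over all sub-control polygons. By fact (c), each edge direction of $L$ approximates $C'$ at nearby parameter values, so $\max_{t\in[0,1]}\theta(t)\to 0$ uniformly as well. Hence for $k$ sufficiently large the sum $T_{\kappa}(L)+\max_t\theta(t)<\frac{\pi}{2}$ for every sub-control polygon simultaneously. Taking $k$ to be the larger of the two thresholds from Steps 1 and 2 yields both conditions at once, proving the lemma.

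\emph{Main obstacle.} The routine part is the decay estimates; the delicate part is Step 1 near the junction points — showing that the shared endpoints, which lie on $\partial\Gamma$ rather than in $\mathrm{int}(\Gamma)$, do not obstruct Condition 1 for the rest of $L$. This is exactly where the regularity of $C$ (nonvanishing $C'$) and the convergence of edge directions to $C'$ are essential, and it is the same analysis of open-curve end conditions referenced from \cite{Maekawa_Patrikalakis_Sakkalis_Yu1998}. I also need to be sure the pipe sections of the sub-arcs are mutually compatible so that ``inside the sub-arc's pipe'' implies ``inside $\Gamma$''; shrinking $r$ once at the start (uniform over all sub-arcs, which is legitimate since curvature and separation bounds only improve on sub-arcs) takes care of this.
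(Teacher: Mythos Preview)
Your proof is correct and follows essentially the same two-step outline as the paper: Hausdorff convergence together with directional (angular) convergence handles Condition~1, and the vanishing of both $T_\kappa(L)$ and $\max_t\theta(t)$ handles Condition~2. The paper's version is terser: for Condition~1 it cites Hausdorff convergence plus the Angular Convergence theorem and an auxiliary lemma from \cite{bez-iso} to place each sub-control polygon inside its pipe section (your ``main obstacle'' about the end discs is exactly what that auxiliary lemma from \cite{bez-iso} is doing), and for Condition~2 it cites Angular Convergence for $T_\kappa(L)\to 0$ and \cite{Morin_Goldman2001} for $\theta(t)\to 0$.

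The one substantive deviation is your fact~(b), the claim that $T_\kappa(L)$ is bounded above by the total curvature of the corresponding B\'ezier sub-arc. That inequality is not a standard cited result and the paper does not use it; instead the paper gets $T_\kappa(L)\to 0$ directly from the Angular Convergence theorem (each exterior angle tends to $0$, and there are only $n-1$ of them). Since your fact~(c) already gives that the edge directions converge uniformly to $C'$, the exterior angles---and hence $T_\kappa(L)$---go to $0$ without appealing to~(b) at all, so your argument survives; but I would drop~(b) or replace it by the Angular Convergence citation rather than leave an unproven comparison inequality in the proof.
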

\begin{proof}
By the convergence in Hausdorff distance under subdivision, sufficiently many subdivision iterations will produce a control polygon that fits inside a nonsingular pipe surface. Furthermore, by the Angular Convergence \cite[Theorem 4.1]{bez-iso} and the lemma \cite[Lemma 5.3]{bez-iso}, possibly more subdivisions will ensure that each sub-control polygon lies in the corresponding nonsingular pipe section, which is the \textit{Condition 1}. Denote the number of subdivision iterations to achieve this by $\iota_1$. 

By the Angular Convergence, $T_{\kappa}(L)$ converges to $0$ under subdivision. Because the discrete derivative of the control polygon converges to the derivative of the B\'ezier curve \cite{Morin_Goldman2001} under subdivision, $\theta(t)$ converges to $0$ for each $t \in [0,1]$. So \textit{Condition 2} will be achieved by sufficiently many subdivision iterations, say $\iota_2$. (The Details to find $\iota_1$ and $\iota_2$ are in Section~\ref{ssec:nchns}.) 
\end{proof}

\begin{rem}
To obtain some intuition for these conditions, restrict our attention to a B\'ezier curve. Consider $L$ to be a sub-control polygon and $C$ to be the corresponding sub-curve. \textit{Condition 1} will ensure that $L$ lies inside a nonsingular pipe section, while \textit{Condition 2} will ensure a local homeomorphism between $L$ and $C$. In particular, \textit{Conditions 1 \textup{and} 2} will be sufficient for us to establish the one-to-one correspondence using normal discs of $C$. 
\end{rem}

\textbf{\textit{Conditions 1 \textup{and} 2} are assumed in the rest of the section.}

Define a function $\tilde{L}(t): [0,1] \rightarrow L$ by letting
\begin{align}\label{eq:tl} \tilde{L}(t) = D_r(t) \cap L, \end{align}
where $D_r(t)$ is the normal disc of $C$ at $t$.

Define a map $h: C \rightarrow L$ for each $p \in C$ by setting
\begin{align} \label{eq:hh}  h(p) = \tilde{L}(C^{-1}(p)) . \end{align} 

We shall show that $h$ is a homeomorphism. The subtlety here is to demonstrate the one-to-one correspondence by showing each normal disc of $C$ intersects $L$ at a single point (which will be the main goal of the following), and intersects $C$ at a single point (which will be easy), under the assumption of \textit{Conditions 1 \textup{and} 2}. 

\subsection{Outline of the proof}\label{sapp:ideas}
\begin{figure}[h!]
\centering
\includegraphics[height=5cm]{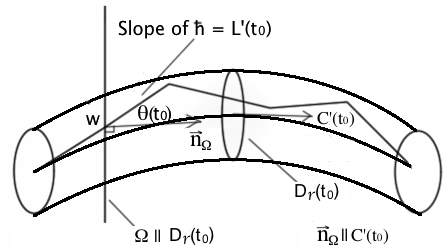}
\caption{Each normal disc intersects $L$ at a single point}
\label{fig:1-1}
\end{figure}

For an arbitrary $t_0 \in [0,1]$, the associated normal disc is denoted as $D_r(t_0)$. Following is the sketch of proving that $D_r(t_0)$ intersects $L$ at a single point. (See Figure~\ref{fig:1-1}.)

\begin{enumerate}
\item \textit{The essential initial steps are to select a non-vertex point of $L$, denoted as $w$, a plane, denoted as $\Omega$, and an angle, denoted as $\theta(t_0)$:}
\begin{enumerate}
\item Define $w$ and $\Omega$: Pick a line segment of $L$ whose slope is equal to $L'(t_0)$, denoted as $\hbar$. Choose an interior point of $\hbar$, denoted as $w$. Let $\Omega$ be the plane that contains $w$ and is parallel to $D_r(t_0)$. (We use $w$ to define two sub-curves of $L$, a `left' sub-curve which terminates at $w$, denoted as $L_l$, and a `right' sub-curve which begins at $w$, denoted as $L_r$.)

\item Consider $\eta(C'(t_0), L'(t_0))=\theta(t_0)$.  Since $\Omega$ is parallel to $D_r(t_0)$, a normal vector of $\Omega$, denoted by $\vec{n}_\Omega$ has the same direction as $C'(t_0)$ and  $\eta(\vec{n}_{\Omega},\hbar)=\eta(C'(t_0), L'(t_0))=\theta(t_0)$. 
\end{enumerate}

\begin{figure}[h!]
\begin{center}
\includegraphics[height=5cm]{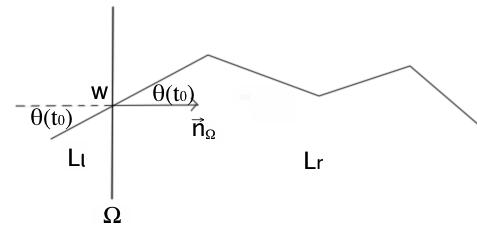}
\end{center}
\caption{Similar angles $\theta(t_0)$}
\label{fig:sim}
\end{figure}

\textit{Remark:} {\small Since $\eta(\vec{n}_{\Omega},\hbar)=\theta(t_0)$, \textit{Condition 2} implies that $T_{\kappa}(L)+\eta(\vec{n}_{\Omega},\hbar)=T_{\kappa}(L)+\theta(t_0)<\frac{\pi}{2}.$ Since $w$ is an interior point of $\hbar$, the angle determined by $\Omega$ and $L_l$, and the angle determined by $\Omega$ and $L_r$, have the same measure $\theta(t_0)$, as shown in Figure~\ref{fig:sim}. So we obtain the similar inequalities $T_{\kappa}(L_l)+\theta(t_0)<\frac{\pi}{2}$ and $T_{\kappa}(L_r)+\theta(t_0)<\frac{\pi}{2}$, which will be crucial.}

\item Prove, by \textit{Condition 2}, that $\Omega  \cap  L_r=w$ . Similarly, show that $\Omega  \cap  L_l=w$. So $\Omega \cap L=w$. (Lemma~\ref{lem:tlpt})

\item Prove that any plane parallel to $\Omega$ intersects $L$ at no more than a single point. (Lemma~\ref{lem:prol})

\item Since $D_r(t_0) \parallel \Omega$, it will follow that $D_r(t_0)$ intersects $L$ no more than a single point. Show, using \textit{Condition 1}, that $D_r(t_0)$ must intersect $L$, and hence $D_r(t_0) \cap L$ is a single point. (Lemma~\ref{lem:dl1})

\end{enumerate}

\subsection{Preliminary lemmas for homeomorphisms}\label{app:pflem}

In order to work with total curvatures of $PL$ curves, an extension of the spherical triangle inequality \cite{GeoTop2005}, given in Lemma~\ref{lem:bt}, will be useful, similar to previous usage by Milnor \cite{Milnor1950}. 

\begin{figure}[h!]
\begin{center}
\includegraphics[height=6cm]{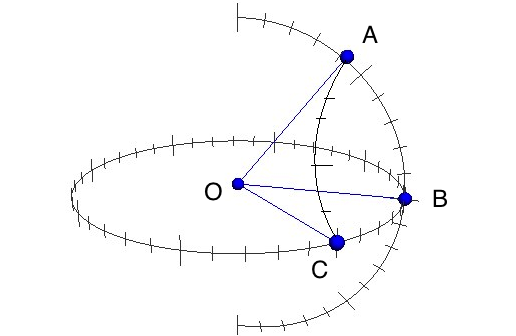}
\end{center}
\caption{Spherical triangle $\triangle{ABC}$}
\label{fig:sph}
\end{figure}

\textbf{ Spherical triangle inequalities:} Consider Figure~\ref{fig:sph}, and the three angles $\angle{AOB}$,  $\angle{BOC}$, and  $\angle{AOC}$, formed by three unit vectors $\overrightarrow{OA}$, $\overrightarrow{OB}$, and $\overrightarrow{OC}$. (Note the common end point $O$. When we consider angles between vectors that do not share such a common end point, we move the vectors to form a common end point.) Denote the arc length of the curve from $A$ to $B$ as $\ell(\widehat{AB})$, and similarly for that from $B$ to $C$ as $\ell(\widehat{BC})$ and that from $A$ to $C$ as $\ell(\widehat{AC})$. The triangle inequality, $\ell(\widehat{AB}) \leq \ell(\widehat{BC})+\ell(\widehat{AC})$, of the spherical triangle $\triangle{ABC}$ provides that 
\begin{align}\label{eq:st1} \angle{AOB} \leq \angle{BOC} + \angle{AOC}.\end{align}

\begin{lem}\label{lem:bt}
Suppose that $\vec{v}_1, \vec{v}_2, \ldots, \vec{v}_m$, where $m \in \{3, 4, \ldots\}$, are nonzero vectors, then
\begin{align}\label{eq:extv} \eta(\vec{v}_1,\vec{v}_m) \leq \eta(\vec{v}_1,\vec{v}_2)+ \eta(\vec{v}_2,\vec{v}_3), + \ldots, + \eta(\vec{v}_{m-1},\vec{v}_m). \end{align}
\end{lem}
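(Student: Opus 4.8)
The plan is to prove the generalized triangle inequality \eqref{eq:extv} by induction on $m$, using the spherical triangle inequality \eqref{eq:st1} as the base case. First I would set up the base case $m=3$: given three nonzero vectors $\vec{v}_1, \vec{v}_2, \vec{v}_3$, normalize them to unit vectors and place them with a common initial point $O$, obtaining points $A$, $B$, $C$ on the unit sphere with $\angle AOB = \eta(\vec{v}_1,\vec{v}_2)$, $\angle BOC = \eta(\vec{v}_2,\vec{v}_3)$, and $\angle AOC = \eta(\vec{v}_1,\vec{v}_3)$. Then inequality \eqref{eq:st1}, applied with the roles of the vertices permuted so that $\angle AOC$ is the ``long'' side (i.e. $\eta(\vec v_1,\vec v_3) \le \eta(\vec v_1, \vec v_2) + \eta(\vec v_2,\vec v_3)$), gives exactly the claim for $m=3$.

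For the inductive step, I would assume \eqref{eq:extv} holds for some $m \ge 3$ and prove it for $m+1$. Given $\vec{v}_1,\dots,\vec{v}_{m+1}$, apply the base case ($m=3$ instance) to the triple $\vec{v}_1, \vec{v}_m, \vec{v}_{m+1}$ to get
\begin{align}
\eta(\vec{v}_1,\vec{v}_{m+1}) \leq \eta(\vec{v}_1,\vec{v}_m) + \eta(\vec{v}_m,\vec{v}_{m+1}),
\end{align}
and then apply the induction hypothesis to $\vec{v}_1,\dots,\vec{v}_m$ to bound $\eta(\vec{v}_1,\vec{v}_m) \le \sum_{i=1}^{m-1}\eta(\vec{v}_i,\vec{v}_{i+1})$. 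Chaining these two bounds yields $\eta(\vec{v}_1,\vec{v}_{m+1}) \le \sum_{i=1}^{m}\eta(\vec{v}_i,\vec{v}_{i+1})$, completing the induction.

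The only genuine subtlety — and the step I would be most careful about — is the reduction of the stated spherical triangle inequality \eqref{eq:st1} to the form actually needed. As written, \eqref{eq:st1} bounds $\angle AOB$ by $\angle BOC + \angle AOC$; to use it I need the symmetric statement that \emph{each} of the three angles is at most the sum of the other two, which follows from \eqref{eq:st1} simply by relabeling the three unit vectors (the inequality is derived from the metric triangle inequality $\ell(\widehat{AB}) \le \ell(\widehat{BC}) + \ell(\widehat{AC})$ on the sphere, which holds with any assignment of the labels). A second minor point is the degenerate configurations: if some $\vec v_i$ and $\vec v_{i+1}$ are parallel or antiparallel, or if the three points are collinear on a great circle, the spherical ``triangle'' degenerates, but the inequality \eqref{eq:st1} still holds (with equality in the antipodal/collinear cases), so no separate argument is needed. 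Everything else is the routine bookkeeping of a chained induction, so I would not belabor it.
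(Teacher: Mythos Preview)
Your proposal is correct and is essentially the same approach the paper has in mind: the paper's proof is the one-line remark that the result follows easily from Inequality~\eqref{eq:st1}, and your induction on $m$ with base case the spherical triangle inequality is exactly the standard way to unpack that remark. Your observations about relabeling the vertices and about degenerate configurations are fine and do not depart from the paper's intent.
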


\begin{proof}
The proof follows easily from Inequality~\ref{eq:st1}. 
\end{proof}

Now, we adopt the notation shown in Figure~\ref{fig:1-1} and formalize the proof outlined in Section~\ref{sapp:ideas}. We assume that the sub-curve on the right hand side of $\Omega$ in Figure~\ref{fig:sim} is $L_r$, and the other one is $L_l$, where we denote the set of ordered vertices of $L_r$ as 
$$\{v_0,v_1, \ldots, v_n\},$$ with $v_0=w$. 

We have $\theta(t_0) \leq \max_{t\in[0,1]} \theta(t)$. It is trivially true that $T_{\kappa}(L_r) \leq T_{\kappa}(L)$, so that with \textit{Condition 2:} $T_{\kappa}(L) + \max_{t\in[0,1]} \theta(t)< \frac{\pi}{2}$, we have
\begin{align}\label{eq:tp2t} T_{\kappa}(L_r) + \theta(t_0) \leq T_{\kappa}(L) + \max_{t\in[0,1]} \theta(t)< \frac{\pi}{2}.\end{align}

The statement and proof of Lemma~\ref{lem:tlpt} depend upon the point $w$ chosen in Step 1 of the Outline presented in Section~\ref{sapp:ideas}. There, the point $w$ was defined as an \emph{interior point} of a line segment $\hbar$ of $L$, so that $w$ is precluded from being a vertex of the original PL curve $L$.

\begin{lem}\label{lem:tlpt}
The plane $\Omega$ intersects $L$ only at the single point $w$. 
\end{lem}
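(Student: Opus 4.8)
\textbf{Proof plan for Lemma~\ref{lem:tlpt}.}

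The plan is to show that the ``right'' sub-curve $L_r$ meets $\Omega$ only at its initial vertex $v_0 = w$, and then invoke the symmetric argument for $L_l$. Recall from the Outline that $\vec{n}_\Omega$ points in the direction of $C'(t_0)$ and that $\eta(\vec{n}_\Omega,\hbar) = \theta(t_0)$, where $\hbar$ is the segment of $L$ through $w$ with direction $L'(t_0)$; since $w$ is an interior point of $\hbar$, the first edge of $L_r$ emanating from $v_0$ lies along $\hbar$, so the angle between $\vec{n}_\Omega$ and the edge $v_0 v_1$ equals $\theta(t_0)$. The idea is that $L_r$ cannot ``turn around'' enough to come back to the plane $\Omega$ because its total curvature budget is too small: walking along $L_r$, the direction of each successive edge $v_{i-1}v_i$ deviates from that of the first edge by at most $T_\kappa(L_r)$ (the sum of the exterior angles passed so far), by the extended spherical triangle inequality, Lemma~\ref{lem:bt}.

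First I would set up the monotonicity estimate. Let $\vec{e}_i$ denote the direction vector of the $i$-th edge $v_{i-1}v_i$ of $L_r$. By Lemma~\ref{lem:bt}, $\eta(\vec{e}_1,\vec{e}_i) \le \sum_{j=2}^{i}\eta(\vec{e}_{j-1},\vec{e}_j) \le T_\kappa(L_r)$ for every $i$. Combining this with $\eta(\vec{n}_\Omega,\vec{e}_1)=\theta(t_0)$ and applying Lemma~\ref{lem:bt} once more gives $\eta(\vec{n}_\Omega,\vec{e}_i) \le T_\kappa(L_r)+\theta(t_0) < \pi/2$ by Inequality~\ref{eq:tp2t}. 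Hence every edge direction of $L_r$ has strictly positive inner product with $\vec{n}_\Omega$. Parametrizing $L_r$ by arc length and projecting onto the line spanned by $\vec{n}_\Omega$, the composite function is strictly increasing, because on each edge its derivative is $\cos\eta(\vec{n}_\Omega,\vec{e}_i) > 0$. Since $\Omega$ is exactly the level set of this linear functional through $w=v_0$, the curve $L_r$ leaves $\Omega$ immediately and never returns; thus $\Omega \cap L_r = \{w\}$.

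Then I would repeat the argument verbatim for $L_l$: the Remark after Step 1 of the Outline records the companion inequality $T_\kappa(L_l)+\theta(t_0) < \pi/2$ (the angle between $\Omega$ and the first edge of $L_l$ at $w$ is again $\theta(t_0)$, because $w$ is interior to $\hbar$), so the same strict-monotonicity conclusion holds with the roles of increasing/decreasing swapped. Therefore $\Omega \cap L_l = \{w\}$ as well, and since $L = L_l \cup L_r$ we conclude $\Omega \cap L = \{w\}$.

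The main obstacle I anticipate is purely bookkeeping rather than conceptual: making the reduction ``total curvature of $L_r$ bounds the angular spread of edge directions'' fully rigorous requires carefully identifying $T_\kappa(L_r)$ with $\sum_j \eta(\vec{e}_{j-1},\vec{e}_j)$ (the exterior angle at each interior vertex is exactly the turning angle between consecutive edge directions) and checking the edge cases --- in particular that the first edge of $L_r$ genuinely has direction $L'(t_0)$, which is where the choice of $w$ as an \emph{interior} point of $\hbar$ (emphasized just before the statement) is essential, since a vertex of $L$ would not have a well-defined single edge direction. Everything else is a one-line consequence of Lemma~\ref{lem:bt} and Inequality~\ref{eq:tp2t}.
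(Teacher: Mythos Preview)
Your proposal is correct but follows a genuinely different route from the paper. The paper argues by contradiction: assuming a second intersection $u\in\Omega\cap L_r$, it closes up the PL arc $L(wu)$ with the chord $\overrightarrow{uw}$, invokes Fenchel's theorem ($T_\kappa\ge 2\pi$ for a closed curve), and then bounds the two new exterior angles at $w$ and $u$ to force $T_\kappa(L(wu))+\theta(t_0)\ge\pi/2$, contradicting Inequality~\ref{eq:tp2t}. Your argument is direct: from Lemma~\ref{lem:bt} alone you get $\eta(\vec{n}_\Omega,\vec{e}_i)\le\theta(t_0)+T_\kappa(L_r)<\pi/2$ for every edge direction $\vec{e}_i$, so the linear functional $x\mapsto\langle x,\vec{n}_\Omega\rangle$ is strictly increasing along $L_r$ (and, by the symmetric bound, along $L_l$), and $\Omega$ is a level set of that functional. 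This is more elementary---it avoids Fenchel entirely---and in fact strictly stronger: once the projection onto $\vec{n}_\Omega$ is strictly monotone along all of $L$, Lemma~\ref{lem:prol} (any plane parallel to $\Omega$ meets $L$ at most once) is immediate, whereas the paper re-runs a nontrivial two-case argument there. The paper's Fenchel-based proof, on the other hand, keeps the reasoning at the level of global total curvature, which is thematically in line with the knot-theoretic framing, at the price of a longer proof.
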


\begin{proof}
Here we prove $\Omega \cap L_r=w$. A similar argument will show $\Omega \cap L_l=w$. 

The oriented initial line segment of $L_r$ is $\overrightarrow{wv_1}$ which lies on $\hbar$. So
$$\eta(\vec{n}_{\Omega}, \overrightarrow{wv_1})=  \eta(\vec{n}_{\Omega},\hbar) =\theta(t_0) < \frac{\pi}{2}.$$

For a proof by contradiction, assume that $\Omega$ intersects $L_r$ at some point $u$ other than $w$. The possibility that $\overrightarrow{wv_1} \subset \Omega$ is precluded by $\theta(t_0) < \pi/2$, so the plane $\Omega$ intersects $\overrightarrow{wv_1}$ only at $w$. So $u \notin \overrightarrow{wv_1}$.

\begin{figure}[h!]
\begin{center}
\includegraphics[height=5cm]{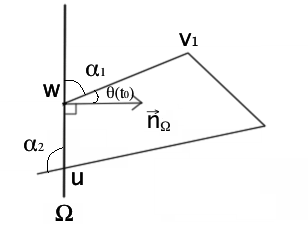}
\end{center}
\caption{The intersection $u$ generates a closed $PL$ curve}
\label{fig:cpl}
\end{figure}

Denote the sub-curve of $L_r$ from $w$ to $u$ as $L(wu)$. Then, since $u \notin \overrightarrow{wv_1}$, the union, $L(wu)\cup \overrightarrow{uw}$, forms a closed $PL$ curve, as Figure~\ref{fig:cpl} shows. By Fenchel's theorem we have 
\begin{align}\label{eq:twu2p} T_{\kappa}(L(wu) \cup\overrightarrow{uw}) \geq 2\pi.\end{align}
Denote the exterior angle of the $PL$ curve $L(wu)\cup \overrightarrow{uw}$ at $w$ as $\alpha_1$ (Figure~\ref{fig:cpl}), that is, 
$$\alpha_1=\eta(\overrightarrow{uw},\overrightarrow{wv_1}).$$ 
By Inequality~\ref{eq:st1},
$$\alpha_1 =\eta(\overrightarrow{uw},\overrightarrow{wv_1}) \leq  \eta(\overrightarrow{uw}, \vec{n}_{\Omega}) + \eta(\vec{n}_{\Omega},\overrightarrow{wv_1}).$$
Since $\overrightarrow{uw} \subset \Omega$, we have that $\eta(\overrightarrow{uw}, \vec{n}_{\Omega})=\frac{\pi}{2}$. Note also that $\eta(\vec{n}_{\Omega},\overrightarrow{wv_1})=\theta(t_0)$. So
$$\alpha_1 \leq \frac{\pi}{2} + \theta(t_0).$$
Denote the exterior angle of the $PL$ curve $L(wu)\cup \overrightarrow{uw}$ at $u$ as $\alpha_2$. By the definition of exterior angles, we have $\alpha_2 \leq \pi$, so that
$$T_{\kappa}(L(wu) \cup\overrightarrow{uw}) =\alpha_1+T_{\kappa}(L(wu))+\alpha_2$$ 
$$\leq \frac{\pi}{2} + \theta(t_0) + T_{\kappa}(L(wu)) + \pi.$$
It follows from Inequality~\ref{eq:twu2p} that
$$ \frac{\pi}{2} + \theta(t_0) + T_{\kappa}(L(wu)) + \pi \geq 2\pi,$$
so
\begin{align}\label{eq:tluv} T_{\kappa}(L(wu)) + \theta(t_0) \geq \frac{\pi}{2}.\end{align}
By $L(wu) \subset L_r$, we have
$$T_{\kappa}(L_r) +\theta(t_0) \geq T_{\kappa}(L(wu)) + \theta(t_0)  \geq \frac{\pi}{2}.$$
But this contradicts Inequality~\ref{eq:tp2t}. 
\end{proof}

\begin{lem}\label{lem:prol}
Any plane parallel to $\Omega$ intersects $L$ at no more than a single point. 
\end{lem}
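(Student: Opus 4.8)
### Proof proposal for Lemma~\ref{lem:prol}

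\begin{proof}[Proof proposal]
The plan is to reduce the statement about an arbitrary plane parallel to $\Omega$ to the statement already proved in Lemma~\ref{lem:tlpt}, by exploiting that the key inequalities used there were \emph{translation invariant}. Fix a plane $\Omega'$ parallel to $\Omega$ and suppose, for contradiction, that $\Omega'$ meets $L$ at two distinct points $p$ and $q$. Since $\Omega'$ is parallel to $\Omega$, it has the same normal direction $\vec{n}_{\Omega}$, and hence the same normal direction as $C'(t_0)$; so the angle data that drove the argument in Lemma~\ref{lem:tlpt} — namely the bound $\eta(\vec{n}_{\Omega},\overrightarrow{pq})=\pi/2$ for any chord of $L$ lying in $\Omega'$, and the angle $\theta(t_0)$ between $\vec{n}_{\Omega}$ and the segments of $L$ — is unchanged.

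First I would set up the closed $PL$ curve: the sub-curve $L(pq)$ of $L$ running between $p$ and $q$, together with the chord $\overrightarrow{qp}$, forms a closed $PL$ curve, so Fenchel's theorem gives $T_{\kappa}(L(pq)\cup\overrightarrow{qp})\geq 2\pi$. The subtlety, compared with Lemma~\ref{lem:tlpt}, is that $p$ and $q$ need not be vertices of $L$ and the chord direction $\overrightarrow{pq}$ is not known in advance to make an angle $\theta(t_0)$ with $\vec{n}_{\Omega}$; it only makes angle exactly $\pi/2$. So at each of the two endpoints $p,q$ of the closed curve I would bound the exterior angle via the spherical triangle inequality (Inequality~\ref{eq:st1}): the exterior angle at $p$ is $\eta(\overrightarrow{qp},\overrightarrow{pp_{+}})$, where $\overrightarrow{pp_{+}}$ is the outgoing direction of $L(pq)$ at $p$, and this is at most $\eta(\overrightarrow{qp},\vec{n}_{\Omega})+\eta(\vec{n}_{\Omega},\overrightarrow{pp_{+}})=\pi/2+\eta(\vec{n}_{\Omega},\overrightarrow{pp_{+}})$. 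Here the second term is the angle between $\vec{n}_{\Omega}$ and a segment of $L$; by the same similar-angle observation used before (all segments of $L$ on one side of $\Omega$, and hence on one side of any parallel plane, make the controlled angle with $\vec{n}_{\Omega}$), together with Lemma~\ref{lem:bt} applied to the chain of segments of $L$, this is bounded by $\theta(t_0)+T_{\kappa}(L\text{ restricted appropriately})$. Symmetrically at $q$. Adding the interior contribution $T_{\kappa}(L(pq))$ and comparing with the Fenchel bound $2\pi$ should again force $T_{\kappa}(L)+\theta(t_0)\geq \pi/2$, contradicting Inequality~\ref{eq:tp2t}.

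The main obstacle I anticipate is handling the outgoing/incoming directions at $p$ and $q$ cleanly when these points are interior to segments of $L$: one must relate $\eta(\vec{n}_{\Omega},\overrightarrow{pp_{+}})$, the angle of the \emph{segment containing} $p$, back to the global bound $T_{\kappa}(L)+\theta(t_0)<\pi/2$. The cleanest route is to note that $L$ has a distinguished segment $\hbar$ making angle exactly $\theta(t_0)$ with $\vec{n}_{\Omega}$, and that any other segment's direction differs from $\hbar$'s by at most $T_{\kappa}$ of the portion of $L$ between them (Lemma~\ref{lem:bt}); hence \emph{every} segment of $L$ makes angle at most $T_{\kappa}(L)+\theta(t_0)$ with $\vec{n}_{\Omega}$. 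Feeding this uniform bound into the two endpoint estimates and summing gives $T_{\kappa}(L(pq)\cup\overrightarrow{qp})\leq \big(\tfrac{\pi}{2}+(T_{\kappa}(L)+\theta(t_0))\big)+T_{\kappa}(L(pq))+\big(\tfrac{\pi}{2}+(T_{\kappa}(L)+\theta(t_0))\big)$, which together with Fenchel and $T_{\kappa}(L(pq))\leq T_{\kappa}(L)$ yields $3\,T_{\kappa}(L)+2\theta(t_0)\geq \pi$; since $T_{\kappa}(L)+\theta(t_0)<\pi/2$ forces $T_{\kappa}(L)<\pi/2$ and $\theta(t_0)<\pi/2$, a slightly more careful accounting (keeping the two endpoint exterior angles each $\leq \pi/2+\theta(t_0)$ and using $T_{\kappa}(L(pq))$ only once, exactly as in Lemma~\ref{lem:tlpt}) tightens this to the needed contradiction. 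I would therefore mirror the bookkeeping of Lemma~\ref{lem:tlpt} as closely as possible rather than introducing the uniform-segment bound, using it only to justify that the chord argument applies verbatim after translating $\Omega$ to $\Omega'$.
\end{proof}
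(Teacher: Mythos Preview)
Your proposal has a genuine gap at the step where you assert that the two endpoint exterior angles are each $\leq \pi/2 + \theta(t_0)$. That bound is not justified: only the distinguished segment $\hbar$ makes angle exactly $\theta(t_0)$ with $\vec{n}_\Omega$; a general segment of $L$ can make angle as large as $\theta(t_0) + T_{\kappa}(L)$ with $\vec{n}_\Omega$, which is precisely the uniform bound you correctly derived via Lemma~\ref{lem:bt}. Your own arithmetic shows that feeding the uniform bound into the Fenchel inequality yields only $3\,T_{\kappa}(L) + 2\theta(t_0) \geq \pi$, and this does \emph{not} contradict Condition~2 (take for instance $T_{\kappa}(L)=0.4\pi$, $\theta(t_0)=0.05\pi$). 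The ``slightly more careful accounting'' you appeal to at the end is exactly the missing content of the proof; the ``similar-angle observation'' you cite applied only to the single segment $\hbar$, not to all segments of $L$.

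The paper closes this gap by a preliminary reduction you omitted. Since $\Omega$ already separates $L$ into $L_l\subset\mathbb{H}_l$ and $L_r\subset\mathbb{H}_r$ (by Lemma~\ref{lem:tlpt}), any parallel plane $\tilde{\Omega}\neq\Omega$ lies in one closed half-space and therefore meets only one of $L_l,L_r$, say $L_r$. Taking $\tilde{w}$ to be the \emph{first} point of $L_r$ on $\tilde{\Omega}$ and writing $L_r=K_1\cup K_2$ with $K_1$ the arc from $w$ to $\tilde{w}$, one obtains the sharp estimate $\eta(\vec{n}_\Omega,\overrightarrow{\tilde{w}v_k})\leq \theta(t_0)+T_{\kappa}(K_1)$ for the outgoing direction of $K_2$ (with a small case split according to whether $\tilde{w}$ is a vertex). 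The argument of Lemma~\ref{lem:tlpt}, rerun with $K_2$ in place of $L_r$ and $\tilde{w}$ in place of $w$, then forces $T_{\kappa}(K_2)+\eta(\vec{n}_\Omega,\overrightarrow{\tilde{w}v_k})\geq \pi/2$, hence $T_{\kappa}(K_1)+T_{\kappa}(K_2)+\theta(t_0)\geq\pi/2$. The crucial point is that $T_{\kappa}(K_1)+T_{\kappa}(K_2)\leq T_{\kappa}(L_r)$ with no double-counting, which contradicts Inequality~\ref{eq:tp2t}. Your two-point chord approach can be repaired along these lines, but only after the half-space reduction and by tracking which portion of $T_{\kappa}(L_r)$ contributes to the angle bound at the first intersection versus to the curvature of the remaining arc; as written, the proposal does not carry out that bookkeeping.
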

\begin{proof}
Suppose $\tilde{\Omega}$ is a plane parallel to $\Omega$. If  $\tilde{\Omega} \cap L = \emptyset$, then we are done, so we assume that $\tilde{\Omega} \cap L \neq \emptyset$. If $\tilde{\Omega}=\Omega$, then Lemma~\ref{lem:tlpt} applies, so we also assume that $\tilde{\Omega} \neq \Omega$, implying that $w \notin \tilde{\Omega}$.

Consider two closed half-spaces $\mathbb{H}_l$ and $\mathbb{H}_r$ such that $\mathbb{H}_l \cup \mathbb{H}_r = \mathbb{R}^3$ and $\mathbb{H}_l \cap \mathbb{H}_r = \Omega$. Since $\Omega \cap L_l=\Omega \cap L_r=w$ and $L=L_l \cup L_r$ is simple, we can assume that $L_l \subset \mathbb{H}_l$ and $L_r \subset \mathbb{H}_r$. 

Suppose without loss of generality that $\tilde{\Omega} \subset H_r$, as shown in Figure~\ref{fig:para}.  Then since $L_l \subset \mathbb{H}_l$ and $\mathbb{H}_l \cap \mathbb{H}_r = \Omega \neq \tilde{\Omega}$, we have $\tilde{\Omega} \cap L_l =\emptyset$. Since we assumed $\tilde{\Omega} \cap L \neq \emptyset$, it follows that $\tilde{\Omega} \cap L_r \neq \emptyset$. Now, it suffices to show that $\tilde{\Omega} \cap L_r$ is a single point. 

Since $L_r$  is compact and oriented, let $\tilde{w}$ denote the first point of $L_r$, at which $\tilde{\Omega}$ intersects $L_r$. Since $\tilde{\Omega} \parallel \Omega$ and $\tilde{\Omega} \neq \Omega$, we have $\tilde{w} \neq w$. We shall show that $\tilde{\Omega} \cap L_r=\tilde{w}$.

\begin{figure}[h!]
\begin{center}
\includegraphics[height=5cm]{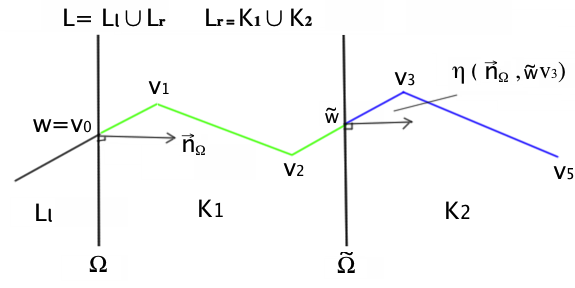}
\end{center}
\caption{A parallel plane intersecting $L$}
\label{fig:para}
\end{figure}

Denote the sub-curve of $L_r$ from its initial point $v_0$ to $\tilde{w}$ as $K_1$, and the sub-curve from $\tilde{w}$ to its end point $v_n$ as $K_2$, as shown in Figure~\ref{fig:para}. Since $\tilde{w}$ is the first intersection point of $\tilde{\Omega} \cap L_r$, but $K_1$ ends in $\tilde{w}$, then it is clear that $\tilde{\Omega} \cap K_1$ contains only $\tilde{w}$. Then in order to show $\tilde{\Omega} \cap L_r = \tilde{w}$, it suffices to show that $\tilde{\Omega} \cap K_2 = \tilde{w}$. 

If $\tilde{w}=v_n$, then it is the degenerate case: $K_2=\tilde{w}$, and we are done. Otherwise, there is a vertex $v_k$ for some $k\in \{1,\ldots,n\}$ such that $\overrightarrow{\tilde{w} v_k}$ is the non-degenerate initial segment of $K_2$, where $\tilde{w} \neq v_k$. Now we shall establish the inequality: 
$$T_{\kappa}(K_2) + \eta(\vec{n}_{\Omega}, \overrightarrow{\tilde{w} v_k}) < \frac{\pi}{2},$$ 
to guarantee a single point of intersection, similar to arguments previously given in Lemma~\ref{lem:tlpt}. To this end, we use Inequality~\ref{eq:st1} to note that
\begin{alignat}{2}\label{eq:exvwk}\eta(\vec{n}_{\Omega}, \overrightarrow{\tilde{w} v_k}) & \leq \eta(\vec{n}_{\Omega}, \overrightarrow{v_0v_1}) + \eta(\overrightarrow{v_0v_1}, \overrightarrow{\tilde{w} v_k})\\&=\theta(t_0)+\eta(\overrightarrow{v_0v_1}, \overrightarrow{\tilde{w} v_k}).\end{alignat}
The proof will be completed if we can show that
\begin{align}\label{eq:tk2}T_{\kappa}(K_2) + \theta(t_0)+\eta(\overrightarrow{v_0v_1}, \overrightarrow{\tilde{w} v_k}) < \frac{\pi}{2}.\end{align}

\textbf{ Case1:} The intersection $\tilde{w}$ is not a vertex, that is, $\tilde{w} \neq v_{k-1}$. Then $\tilde{w}$ is an interior point of $\overrightarrow{v_{k-1}v_k}$, and hence $T_{\kappa}(K_1)=\eta(\overrightarrow{v_0v_1}, \overrightarrow{v_1v_2})+\ldots+\eta(\overrightarrow{v_{k-2}v_{k-1}}, \overrightarrow{v_{k-1}\tilde{w}})$, and $\eta(\overrightarrow{v_{k-1}\tilde{w}}, \overrightarrow{\tilde{w} v_k})=0$. By Lemma~\ref{lem:bt},  
$$\eta(\overrightarrow{v_0v_1}, \overrightarrow{\tilde{w} v_k})$$
$$\leq \eta(\overrightarrow{v_0v_1}, \overrightarrow{v_1v_2})+\ldots+\eta(\overrightarrow{v_{k-2}v_{k-1}}, \overrightarrow{v_{k-1}\tilde{w}}) + \eta(\overrightarrow{v_{k-1}\tilde{w}}, \overrightarrow{\tilde{w} v_k})$$ 
$$= T_{\kappa}(K_1).$$
So 
$$T_{\kappa}(K_2) + \theta(t_0)+\eta(\overrightarrow{v_0v_1}, \overrightarrow{\tilde{w} v_k})  \leq T_{\kappa}(K_2) + \theta(t_0) + T_{\kappa}(K_1).$$
We also have
$$T_{\kappa}(L_r)=T_{\kappa}(K_1)+ \eta(\overrightarrow{v_{k-1}\tilde{w}}, \overrightarrow{\tilde{w} v_k})+T_{\kappa}(K_2)=T_{\kappa}(K_1)+T_{\kappa}(K_2),$$
(since $\eta(\overrightarrow{v_{k-1}\tilde{w}}, \overrightarrow{\tilde{w} v_k})=0$), so that
$$T_{\kappa}(K_2) + \theta(t_0)+\eta(\overrightarrow{v_0v_1}, \overrightarrow{\tilde{w} v_k})  \leq T_{\kappa}(L_r) +\theta(t_0),$$
which is less than $\frac{\pi}{2}$, by Inequality~\ref{eq:tp2t}.

\vspace{1ex}

\textbf{ Case2:} The intersection $\tilde{w}$ is a vertex, that is, $\tilde{w}=v_{k-1}$, then $T_{\kappa}(K_1)=\eta(\overrightarrow{v_0v_1}, \overrightarrow{v_1v_2})+\ldots+\eta(\overrightarrow{v_{k-3}v_{k-2}}, \overrightarrow{v_{k-2}\tilde{w}})$. By Lemma~\ref{lem:bt},  
\begin{alignat}{3}
& \ \ \ \  \eta(\overrightarrow{v_0v_1}, \overrightarrow{\tilde{w} v_k}) \nonumber \\
& \leq \eta(\overrightarrow{v_0v_1}, \overrightarrow{v_1v_2})+\ldots+\eta(\overrightarrow{v_{k-3}v_{k-2}}, \overrightarrow{v_{k-2}\tilde{w}})+\eta(\overrightarrow{v_{k-2}\tilde{w}}, \overrightarrow{\tilde{w} v_k}) \nonumber \\
& \leq T_{\kappa}(K_1)+ \eta(\overrightarrow{v_{k-2}\tilde{w}}, \overrightarrow{\tilde{w} v_k}). \nonumber
\end{alignat}
So
$$T_{\kappa}(K_2) + \theta(t_0)+\eta(\overrightarrow{v_0v_1}, \overrightarrow{\tilde{w} v_k})$$ 
$$\leq T_{\kappa}(K_2) + \theta(t_0)+ T_{\kappa}(K_1)+ \eta(\overrightarrow{v_{k-2}\tilde{w}}, \overrightarrow{\tilde{w} v_k}).$$
But by the definition of the total curvature for a $PL$ curve,
$$T_{\kappa}(K_2) + T_{\kappa}(K_1)+ \eta(\overrightarrow{v_{k-2}\tilde{w}}, \overrightarrow{\tilde{w} v_k})=T_{\kappa}(L_r).$$
So 
$$T_{\kappa}(K_2) + \theta(t_0)+\eta(\overrightarrow{v_0v_1}, \overrightarrow{\tilde{w} v_k}) \leq T_{\kappa}(L_r) +\theta(t_0),$$
which is less than $\frac{\pi}{2}$, by Inequality~\ref{eq:tp2t}.

So Inequality~\ref{eq:tk2} holds, which is an inequality analogous to Inequality~\ref{eq:tp2t}.  If in the proof of Lemma~\ref{lem:tlpt}, we change $\Omega$ to $\tilde{\Omega}$, $L_r$ to $K_2$ and $\theta(t_0)$ to $\eta(\vec{n}_{\Omega}, \overrightarrow{\tilde{w} v_k})$, then a similar proof of Lemma~\ref{lem:tlpt} will show that $\tilde{\Omega} \cap K_2 = \tilde{w}$. This completes the proof. 
\end{proof}

\begin{lem}\label{lem:dl1}
For an arbitrary $t_0 \in [0,1]$, the disc $D_r(t_0)$ intersects $C$ at a unique point, and also intersects $L$ at a unique point.
\end{lem}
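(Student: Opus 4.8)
The plan is to dispose of the two claims separately, drawing on the tubular structure of the nonsingular pipe for the $C$–part and on Lemmas~\ref{lem:tlpt} and~\ref{lem:prol} (together with \textit{Condition 1}) for the $L$–part. First, the nonsingularity of the pipe surface of radius $r$ means that the map
\[ \Phi\colon [0,1]\times \overline{D}\longrightarrow \Gamma,\qquad \Phi(t,(x,y))=C(t)+r\bigl(x\,\mathbf{n}(t)+y\,\mathbf{b}(t)\bigr), \]
where $\overline{D}$ is the closed unit disc in $\mathbb{R}^2$, is injective; being a continuous injection from a compact space into a Hausdorff space, it is a homeomorphism onto $\Gamma$, and $\Phi(\{t\}\times\overline{D})=D_r(t)$. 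In particular the normal discs $\{D_r(t)\}_{t\in[0,1]}$ are pairwise disjoint and the parameter projection $\pi=\mathrm{pr}_{1}\circ\Phi^{-1}\colon\Gamma\to[0,1]$ is continuous with $\pi^{-1}(t)=D_r(t)$. Now $C(t_0)$ is the centre of $D_r(t_0)$, so $C(t_0)\in D_r(t_0)\cap C$; and if $C(t_1)\in D_r(t_0)$, then $C(t_1)\in D_r(t_0)\cap D_r(t_1)$, whence $t_1=t_0$. Thus $D_r(t_0)\cap C=\{C(t_0)\}$.

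For $D_r(t_0)\cap L$ I would prove ``at most one point'' and ``at least one point'' in turn. Recall from the Outline the segment $\hbar\subset L$ of direction $L'(t_0)$, its interior point $w$, and the plane $\Omega$ through $w$ parallel to $D_r(t_0)$; then the plane spanned by $D_r(t_0)$ is parallel to $\Omega$, so Lemma~\ref{lem:prol} gives that $D_r(t_0)$ meets $L$ in at most one point. For ``at least one point'', note that \textit{Condition 1} places $L\setminus\{L(0),L(1)\}$ in $\mathrm{int}(\Gamma)\subset\Gamma$, while $L(0)=C(0)\in D_r(0)\subset\Gamma$ and $L(1)=C(1)\in D_r(1)\subset\Gamma$; hence $L\subset\Gamma$ and $\pi\circ L\colon[0,1]\to[0,1]$ is continuous. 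Since $(\pi\circ L)(0)=\pi(C(0))=0$ and $(\pi\circ L)(1)=\pi(C(1))=1$, the intermediate value theorem yields $s\in[0,1]$ with $(\pi\circ L)(s)=t_0$, i.e.\ $L(s)\in D_r(t_0)$. Combining the two bounds, $D_r(t_0)\cap L$ is exactly one point, and the lemma follows.

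I expect the only delicate point to be the assertion that a nonsingular pipe surface of radius $r$ really furnishes the global product structure $\Phi$ on $\Gamma$ --- in particular the injectivity of $\Phi$ on the full solid tube (not merely on the boundary pipe surface) and the compatibility at the end discs that the excerpt has absorbed into $\partial\Gamma$; once $\pi$ is available, the parallelism of $D_r(t_0)$ with $\Omega$, the inclusion $L\subset\Gamma$, and the intermediate-value step are all routine. A minor bookkeeping check is that $L'(t_0)$, hence $\hbar$ and the interior point $w$, is the datum already fixed for this $t_0$ in the Outline and used in Lemma~\ref{lem:tlpt}, so that $\Omega$ is well-defined; this is inherited rather than re-proved here.
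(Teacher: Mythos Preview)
Your proof is correct and follows essentially the same approach as the paper: disjointness of the normal discs for the $C$--part, Lemma~\ref{lem:prol} together with the parallelism $D_r(t_0)\parallel\Omega$ for ``at most one'' in the $L$--part, and \textit{Condition~1} for ``at least one''. The only cosmetic difference is in this last step: the paper argues by contradiction that if $D_r(t_0)\cap L=\emptyset$ then $L\subset\Gamma\setminus D_r(t_0)$, which is disconnected, whereas you phrase the same connectedness fact as an intermediate--value argument for the continuous projection $\pi\circ L$; both rest on the product structure of $\Gamma$ that you correctly flag as the step being taken for granted.
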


\begin{proof}
First, we have $C(t_0) \in D_r(t_0) \cap C$. If there is an additional point, say $C(t_1 )\in D_r(t_0) \cap C$ where $t_1  \neq t_0$, then we have that $C(t_1) \neq C(t_0)$ because $C$ is simple, and hence $D(t_1) \neq D(t_0)$. Since also $C(t_1 )\in D_r(t_1)$, we have that $C(t_1) \in D_r(t_0) \cap D_r(t_1)$. But this contradicts the non-self-intersection of $\Gamma$. So $D_r(t_0) \cap C$ must be a unique point.

Now, we show that $D_r(t_0) \cap L \neq \emptyset$. If $t_0=0$ or $t_0=1$, then since $L(0) \in D_r(0)$ and $L(1) \in D_r(1)$, we have that  $D_r(t_0) \cap L \neq \emptyset$. 

Otherwise if $t_0 \in (0,1)$, then assume to the contrary that $D_r(t_0) \cap L =\emptyset$. Since $L \subset \Gamma$ by \textit{Condition 1}, the contrary assumption implies that $L \subset \Gamma \setminus D_r(t_0)$. Because $C$ is an open curve, we have that $D_r(0) \neq D_r(1)$. So  $\Gamma \setminus D_r(t_0)$ consists of two disconnected components, but this implies that $L$ is disconnected, which is a contradiction. So 
\begin{align}\label{eq:Drl}D_r(t_0) \cap L \neq \emptyset.\end{align}

Since $D_r(t_0) \parallel \Omega$ (as discussed in Section~\ref{sapp:ideas}), Lemma~\ref{lem:prol} implies that the plane containing $D_r(t_0)$ intersects $L$ at no more than a single point, which, of course, further implies that $D_r(t_0)$ intersects $L$ at no more than a single point. This plus Inequality~\ref{eq:Drl} shows that $D_r(t_0) \cap L$ is a single point. 

If $D_r(t_0) \cap L = D_r(t_1) \cap L$ for some $t_1 \neq t_0$, then $D_r(t_0)$ and $D_r(t_1)$ intersect, which contradicts the non-self-intersection of $\Gamma$. So there is an one-to-one correspondence between the parameter $t$ and  the point $D_r(t) \cap L$ for $t\in[0,1]$, which shows the uniqueness. 
\end{proof}
\begin{lem}\label{lem:lin}
The map $\tilde{L}(t)$ given by Equation~\ref{eq:tl} is well defined, one-to-one and onto.
\end{lem}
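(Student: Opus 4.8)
The plan is to verify the three claims — well-definedness, injectivity, and surjectivity — in turn, leaning almost entirely on the work already done in Lemmas~\ref{lem:tlpt}, \ref{lem:prol}, and \ref{lem:dl1}. For \emph{well-definedness}, I must show that for each $t\in[0,1]$ the set $D_r(t)\cap L$ is a single point of $L$, so that $\tilde{L}(t)=D_r(t)\cap L$ unambiguously names an element of $L$. This is precisely the content of Lemma~\ref{lem:dl1}: for every $t_0\in[0,1]$ the disc $D_r(t_0)$ meets $L$ in exactly one point (nonempty by \textit{Condition 1} via the connectedness argument, and at most one point because $D_r(t_0)\parallel\Omega$ together with Lemma~\ref{lem:prol}). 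So well-definedness is immediate.

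For \emph{injectivity} of $\tilde{L}$, suppose $\tilde{L}(t_0)=\tilde{L}(t_1)$ for some $t_0\neq t_1$; then $D_r(t_0)\cap L=D_r(t_1)\cap L$ is a common point lying in both normal discs, so $D_r(t_0)\cap D_r(t_1)\neq\emptyset$, contradicting the non-self-intersection of the pipe section $\Gamma$ (the pipe surface of radius $r$ is nonsingular, so distinct normal discs are disjoint). This is exactly the last paragraph of the proof of Lemma~\ref{lem:dl1}, so I can simply cite it.

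For \emph{surjectivity}, I need every point $q\in L$ to equal $D_r(t)\cap L$ for some $t\in[0,1]$. Since $L\subset\Gamma=\bigcup_{t\in[0,1]}D_r(t)$ by \textit{Condition 1} (taking care of the endpoints $L(0)=C(0)\in D_r(0)$ and $L(1)=C(1)\in D_r(1)$ separately, as they lie on the end discs $\partial\Gamma$), the point $q$ lies in $D_r(t)$ for some parameter $t$; since $D_r(t)\cap L$ is exactly one point by Lemma~\ref{lem:dl1}, that point must be $q$, giving $q=\tilde{L}(t)$. Hence $\tilde{L}$ is onto $L$.

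I do not expect a genuine obstacle here: the statement is essentially a repackaging of Lemma~\ref{lem:dl1} into the language of the map $\tilde{L}$, and the only point requiring a moment's care is the treatment of the two endpoints, where $D_r(0)$ and $D_r(1)$ are the end discs of $\partial\Gamma$ rather than interior normal discs, and where \textit{Condition 1} is stated for $L\setminus\{L(0),L(1)\}$; but $L(0)=C(0)\in D_r(0)$ and $L(1)=C(1)\in D_r(1)$ handle these directly, so the covering $L=\bigcup_{t\in[0,1]}(D_r(t)\cap L)$ still holds and the argument goes through.
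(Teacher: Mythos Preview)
Your proposal is correct and follows essentially the same approach as the paper's proof: well-definedness from Lemma~\ref{lem:dl1}, injectivity from the disjointness of distinct normal discs in a nonsingular pipe, and surjectivity from $L\subset\Gamma$. The only minor difference is that the paper phrases injectivity positively (from $D_r(t_1)\cap D_r(t_2)\neq\emptyset$ and nonsingularity conclude $D_r(t_1)=D_r(t_2)$, then use simplicity of $C$ to get $t_1=t_2$), whereas you argue by contradiction assuming $t_0\neq t_1$; your version implicitly uses that $C$ is simple to know the two discs are genuinely distinct, which you may want to make explicit.
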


\begin{proof}
It is well defined by Lemma~\ref{lem:dl1}. Suppose $\tilde{L}(t_1)=\tilde{L}(t_2)$, then $D_r(t_1) \cap L=D_r(t_2) \cap L$ which is not empty by Lemma~\ref{lem:dl1}. So $D_r(t_1) \cap D_r(t_2) \neq \emptyset$. Since $\Gamma$ is nonsingular, it follows that $D_r(t_1) = D_r(t_2)$. Since $C$ is simple, if $D_r(t_1)=D_r(t_2)$, then $t_1=t_2$. Thus $\tilde{L}$ is one-to-one. Since $L \subset \Gamma$, each point of $L$ is contained in some disc $D_r(t)$. So $\tilde{L}$ is onto. 
\end{proof}

\begin{lem}\label{lem:contl}
The map $\tilde{L}(t) $ given by Equation~\ref{eq:tl} is continuous. 
\end{lem}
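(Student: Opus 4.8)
The plan is to realize $\tilde L$ as a reparametrization of $L$ and then reduce continuity to a routine compactness argument about the zero set of a continuous function. Fix the (continuous, and injective since $L$ is simple) parametrization $L(\cdot):[0,1]\to\mathbb R^3$ of the $PL$ arc already in use, and define $F:[0,1]\times[0,1]\to\mathbb R$ by $F(s,t)=\langle L(s)-C(t),\,C'(t)\rangle$. Since $C$ is $C^2$ (so $C'$ is continuous) and $L(\cdot)$ is continuous, $F$ is continuous on the compact square $[0,1]^2$. For fixed $t$, the zero set $\{s: F(s,t)=0\}$ is exactly the set of $s$ with $L(s)$ in the plane through $C(t)$ normal to $C'(t)$, i.e. the plane containing $D_r(t)$. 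By Lemma~\ref{lem:dl1} and the argument in its proof (which, via Lemma~\ref{lem:prol}, shows this plane meets $L$ in at most one point, while the disc $D_r(t)$ inside that plane meets $L$ in exactly one point, namely $\tilde L(t)$), the function $F(\cdot,t)$ has a \emph{unique} zero for every $t\in[0,1]$; call it $\sigma(t)$, so that $L(\sigma(t))=\tilde L(t)$. Since $L(\cdot)$ is injective, $\sigma:[0,1]\to[0,1]$ is well defined and $\tilde L=L\circ\sigma$.

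Next I would prove $\sigma$ is continuous. Let $t_n\to t_0$ in $[0,1]$. Because $[0,1]$ is sequentially compact, it suffices to show that every convergent subsequence $\sigma(t_{n_k})\to s^\ast$ has limit $s^\ast=\sigma(t_0)$. Letting $k\to\infty$ in $F\big(\sigma(t_{n_k}),t_{n_k}\big)=0$ and using continuity of $F$ gives $F(s^\ast,t_0)=0$; by uniqueness of the zero of $F(\cdot,t_0)$ we conclude $s^\ast=\sigma(t_0)$. Hence $\sigma(t_n)\to\sigma(t_0)$, so $\sigma$ is continuous, and therefore $\tilde L=L\circ\sigma$ is continuous as a composition of continuous maps.

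The main obstacle is not the compactness step, which is standard, but ensuring that $F(\cdot,t)$ has a genuinely unique zero for \emph{every} parameter value $t\in[0,1]$: one must rule out that a segment of $L$ lies inside the normal plane of $C$ at $t$, and that this plane meets $L$ a second time far from $D_r(t)$. This is precisely what Lemmas~\ref{lem:prol} and~\ref{lem:dl1} supply under Conditions 1 and 2, so the real content of the argument is invoking them correctly for an arbitrary $t$, including the endpoints $t=0$ and $t=1$ (handled in Lemma~\ref{lem:dl1}). A minor technical point to dispatch along the way is the value of $L(\cdot)$ at its finitely many breakpoints, which affects neither the continuity of $F$ nor that of $\sigma$.
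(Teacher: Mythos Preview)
Your proof is correct and takes a genuinely different route from the paper's. The paper gives a direct $\varepsilon$--$\delta$ argument built out of the geometry of the pipe: fixing $s$ and $q=\tilde L(s)$, it takes the first vertex $v$ of $L$ after $q$, slides along $\overline{qv}$ a distance $\varepsilon$ to a point $q'=\tilde L(\tau)$, argues that $\overline{qq'}\subset\Gamma_{s\tau}=\bigcup_{t\in[s,\tau]}D_r(t)$, and concludes that $\tilde L(t)\in\overline{qq'}$ for $t\in(s,\tau)$. Your argument instead encodes the normal-plane intersection as the zero set of the continuous $F(s,t)=\langle L(s)-C(t),C'(t)\rangle$, uses Lemmas~\ref{lem:prol} and~\ref{lem:dl1} to get a \emph{unique} zero $\sigma(t)$ for each $t$, and then invokes the standard compactness fact that a function defined implicitly by a unique zero of a continuous family is continuous. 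This is cleaner and more robust: it bypasses the somewhat delicate inclusion $\overline{qq'}\subset\Gamma_{s\tau}$ and the case analysis at the endpoints, and it isolates exactly where Conditions~1 and~2 are used (namely, to guarantee uniqueness of the zero). The paper's approach, on the other hand, is more constructive in that the $\delta$ is produced explicitly from the local geometry, which could in principle be turned into a quantitative modulus of continuity; your compactness argument gives no such estimate. One small point worth stating explicitly in your write-up: the uniqueness of the zero of $F(\cdot,t)$ needs the whole normal \emph{plane} (not just the disc $D_r(t)$) to meet $L$ in a single point, which is precisely what is shown inside the proof of Lemma~\ref{lem:dl1} via Lemma~\ref{lem:prol}; you allude to this, but it is the crux, so make it explicit.
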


\begin{proof}
Let $\Gamma_{t_1t_2}$ be the portion of $\Gamma$ corresponding to $[t_1,t_2]$, that is
$$\Gamma_{t_1t_2}=\bigcup_{t\in[t_1,t_2]}D_r(t).$$
Suppose that $s \in [0,1]$ is an arbitrary parameter. Then by Lemma~\ref{lem:lin}, there is a unique point $q \in L$ such that $q=\tilde{L}(s)=D_r(s) \cap L$. We shall prove the continuity of $\tilde{L}(t)$ at $s$ by the definition, that is, for $\forall \epsilon>0$, there exists a $\delta>0$ such that $|t-s|<\delta$ implies $||\tilde{L}(t) -\tilde{L}(s)||<\epsilon$. 

Note that $D_r(s)$ divides $\Gamma$ into $\Gamma_{0s}$ and $\Gamma_{s1}$. Since $C$ is an open curve, it follows that $D_r(0) \neq D_r(1)$, and that $\Gamma_{0s}$ and $\Gamma_{s1}$ intersect at only $D_r(s)$.  By Lemma~\ref{lem:dl1}, $D_r(s) \cap L$ is a single point, so $L$ is divided by $D_r(s)$ into two sub-curves, denoted as $K_1$ and $K_2$, that is  $K_1 \subset \Gamma_{0s}$ and $K_2 \subset \Gamma_{s1}$, as shown in Figure~\ref{fig:cont}. 

\begin{figure}[h!]
\begin{center}
\includegraphics[height=5cm]{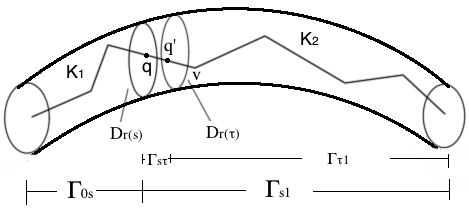}
\end{center}
\caption{If $|s-\tau|<\delta$, then $||q-q'|| <\epsilon$}
\label{fig:cont}
\end{figure}

\textbf{ Case1:}  The parameter $s$ is such that $s \neq 0$ and $s \neq 1$. Consider $\Gamma_{s1}$ first. Since $K_2$ is oriented, we can let $v$ be the first vertex of $K_2$ that is nearest (in distance along $K_2$) to $q$. For any $0<\epsilon<||\overline{qv}||$, let $q' \in \overline{qv}$ such that $||\overline{qq'}||=\epsilon$. By Lemma~\ref{lem:lin}, $q'=\tilde{L}(\tau)=D_r(\tau) \cap L$ for some $\tau \in (s,1]$. 

First, note $\overline{qq'} \cap \textup{int}\Gamma_{s\tau} \neq \emptyset$. To verify this, observe $\overline{qq'} \subset \overline{qv} \subset K_2 \subset \Gamma_{s1}$ and $\Gamma_{s1}=\Gamma_{s\tau} \cup \Gamma_{\tau 1}$, so $\overline{qq'} \subset \Gamma_{s\tau} \cup \Gamma_{\tau 1}$. If $\overline{qq'} \cap \textup{int}\Gamma_{s\tau} = \emptyset$, then the segment $\overline{qq'}$ is contained in $D_r(s) \cup \Gamma_{\tau1}$ which is disconnected. This implies $\overline{qq'}$ is disconnected, which is a contradiction. 

Secondly, note that the subset $\Gamma_{s\tau}$ of a nonsingular pipe section is connected (since $C$ is $C^1$), and $\overline{qq'}$ is a line segment jointing the end discs of $\Gamma_{s\tau}$, and has intersections with interior of $\Gamma_{s\tau}$. This geometry implies that 
\begin{align}\label{eq:qq'}\overline{qq'} \subset \Gamma_{s \tau}.\end{align}

Let $\delta=\tau-s$. For an arbitrary $t \in (s,s+\delta)=(s,\tau)$,  Inclusion~\ref{eq:qq'} implies that $\tilde{L}(t)=D_r(t) \cap \overline{qq'}$. Since neither $\tilde{L}(t) \neq q$ or $\tilde{L}(t) \neq q'$, it follows that $\tilde{L}(t) \in \textup{int}(\overline{qq'})$. So 
$$||\tilde{L}(t)-\tilde{L}(s)||<||\overline{qq'}||=\epsilon.$$
This shows the right-continuity. We similarly consider the $\Gamma_{0s}$ and obtain the left-continuity.

\vspace{1ex}

\textbf{ Case2:} The parameter $s$ is such that $s=0$ or $s=1$. We similarly obtain the right-continuity if $s=0$, or the left-continuity if $s=1$. 
\end{proof}

\begin{theo}\label{thm:hhom}
If $L$ and $C$ satisfy \textit{Conditions 1 \textup{and} 2}, then the map $h$ defined by Equation~\ref{eq:hh} is a homeomorphism.
\end{theo}

\begin{proof}
By Lemma~\ref{lem:lin}, $\tilde{L}(t)$ is one-to-one and onto. By Lemma~\ref{lem:contl}, $\tilde{L}(t)$ is continuous. Since $\tilde{L}$ is defined on a compact domain, it is a homeomorphism.

Note that $C$ is simple and open, so $C(t)$ is one-to-one, and it is obviously onto. The map $C(t)$ is also continuous and defined on a compact domain, so $C(t)$ a homeomorphism. Since $h$ is a composition of $C^{-1}$ and $\tilde{L}$, $h$ is a homeomorphism.  
\end{proof} 

\begin{rem}
A very natural way to define a homeomorphism between simple curves $C$ and $L$ would be by $f(p) = L(C^{-1}(p))$.  An easy method to extend $f$ to a homotopy is the straight-line homotopy.  However, we were not able to establish that a straight-line homotopy based upon $f$ would also be an isotopy, where it would be necessary to show that each pair of line segments generated is disjoint. Our definition of $h$  in Equation~\ref{eq:hh} was strategically chosen so that this isotopy criterion is easily established, since the normal discs are already pairwise disjoint.
\end{rem}

\section{Construction of Ambient Isotopies}\label{ssec:cab}
Note that $L$ and $C$ fit inside a nonsingular pipe section $\Gamma$ of $C$. For a similar problem, an explicit construction has appeared \cite[Section 4.4]{Lance2009} \cite{TJP2011}. The proof of Lemma~\ref{lem:eamlc}, below, is a simpler version of a previous proof  \cite[Corollary 4]{TJP2011}. The construction here relies upon some basic properties of convex sets, which are repeated here.  For clarity, the complete proof of Lemma~\ref{lem:eamlc} is given here.

\begin{figure}[h!]
\centering
    \subfigure[Rays outward]
    {
   \includegraphics[height=3.5cm]{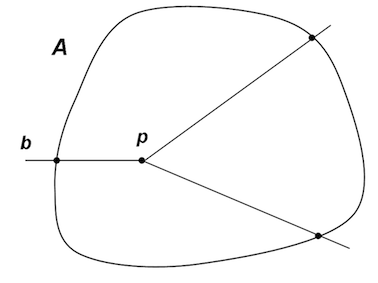} 
\label{fig:cvx}
    }
    \subfigure[Variant of a push]
    {
    \includegraphics[height=3.5cm]{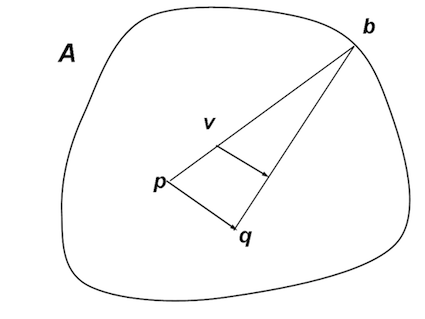}
\label{fig:cpush}
    }
    \caption{Convex subset}
\end{figure}

The Images in Figures~\ref{fig:cvx} and~\ref{fig:cpush} were created by L. E. Miller  and are used, here, with permission.

\begin{lem}\label{lem:asf}\cite[Lemma 6]{TJP2011}
Let $A$ be a compact convex subset of $\mathbb{R}^2$ with non-empty interior. For each point $p \in int(A)$ and $b \in \partial A$, the ray going from $p$ to $b$
only intersects $\partial A$ at $b$ (See Figure~\ref{fig:cvx}.)
\end{lem}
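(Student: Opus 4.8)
\textbf{Proof plan for Lemma~\ref{lem:asf}.}

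The plan is to argue by contradiction using the convexity of $A$ together with the fact that $b$ is a boundary point, exploiting that a convex set lies entirely on one side of a supporting line through any of its boundary points. First I would set up coordinates so that $p$ is at the origin, and parametrize the ray from $p$ through $b$ as $\gamma(s) = s\,b$ for $s \geq 0$, so that $\gamma(0) = p \in \textup{int}(A)$ and $\gamma(1) = b \in \partial A$. The goal is to show $\gamma(s) \in \textup{int}(A)$ for $0 \leq s < 1$ and $\gamma(s) \notin A$ for $s > 1$; in particular $\gamma(s) \notin \partial A$ for every $s \neq 1$ with $s \geq 0$.

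For the portion $0 \le s < 1$: since $p \in \textup{int}(A)$, there is a small disc $B(p,\rho) \subset A$; because $b \in A$ and $A$ is convex, the convex hull of $B(p,\rho) \cup \{b\}$ is contained in $A$, and every point $\gamma(s)$ with $0 \le s < 1$ is an interior point of this hull (it has a whole cone-shaped neighborhood inside the hull), hence $\gamma(s) \in \textup{int}(A)$, so $\gamma(s) \notin \partial A$. For the portion $s > 1$: suppose for contradiction that $\gamma(s_0) \in A$ for some $s_0 > 1$. Then $A$ contains both $\gamma(s_0)$ and $p = \gamma(0)$, so by convexity it contains the whole segment between them, which includes $b = \gamma(1)$ as an \emph{interior} point of that segment; combined with the interior disc $B(p,\rho) \subset A$, one sees that $b$ has a neighborhood contained in $A$ (the convex hull of $B(p,\rho)$ and $\gamma(s_0)$ is a neighborhood of $b$), contradicting $b \in \partial A$. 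Hence no such $s_0$ exists, and the ray meets $\partial A$ only at $b$.

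The main obstacle is the careful verification that the relevant points are genuinely \emph{interior} to the convex hull of a disc and a point — i.e. producing an explicit open neighborhood around $\gamma(s)$ (for $s<1$) and around $b$ (in the $s>1$ case) inside that hull. This is elementary planar convex geometry: the convex hull of $B(p,\rho) \cup \{q\}$ contains, for each point $x$ strictly between $p$ and $q$, a disc of radius proportional to $\rho \cdot \mathrm{dist}(x,q)/\mathrm{dist}(p,q)$; I would either cite this standard fact or give the one-line scaling argument. Everything else is a direct application of the definition of convexity and of boundary/interior points in $\mathbb{R}^2$, so no delicate estimates are needed.
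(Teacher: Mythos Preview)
Your argument is correct and is the standard elementary proof of this basic convex-geometry fact: the ``cone'' (convex hull of an interior disc and a point of $A$) shows that every point strictly before $b$ on the ray lies in $\textup{int}(A)$, and the same cone construction with a hypothetical point beyond $b$ would force $b$ itself into $\textup{int}(A)$, contradicting $b\in\partial A$. The one technical point you flag---that a point strictly between the center of a disc and an external point lies in the interior of their convex hull---is indeed elementary (the scaling argument you sketch gives a disc of radius $\rho(1-s)$ around $\gamma(s)$ inside the hull), so there is no gap.

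Note, however, that the paper does not supply its own proof of this lemma: it is simply quoted from~\cite[Lemma~6]{TJP2011} and used as a black box in the construction of Lemma~\ref{lem:eamlc}. So there is no ``paper's proof'' to compare against; your proposal is a self-contained justification of a result the authors chose to cite rather than reprove.
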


\begin{lem}\label{lem:lapb}\cite[Lemma 7]{TJP2011}
Let $A$ be a compact convex subset of $\mathbb{R}^2$ with non-empty interior and fix $p \in int(A)$. For each boundary point $b \in \partial A$, denote by $[p, b]$ the line segment from $p$ to $b$. Then $A=\bigcup_{b \in \partial A} [p, b]$.
\end{lem}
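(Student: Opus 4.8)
The plan is to prove the set equality by establishing the two inclusions separately, the first being trivial and the second being the substantive one. For $\bigcup_{b \in \partial A}[p,b] \subseteq A$: since $p \in A$ and every $b \in \partial A$ satisfies $b \in A$ (as $A$ is closed), convexity of $A$ forces each segment $[p,b]$ to lie entirely in $A$, and the union over all $b \in \partial A$ therefore stays inside $A$. Note $\partial A \neq \emptyset$ because $A$ is bounded (so $A \neq \mathbb{R}^2$) and non-empty.

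For the reverse inclusion $A \subseteq \bigcup_{b \in \partial A}[p,b]$, I would fix an arbitrary $q \in A$. If $q = p$ then $q \in [p,b]$ for any $b \in \partial A$, so assume $q \neq p$. Consider the ray from $p$ through $q$, parametrized by $\gamma(t) = p + t(q-p)$ for $t \geq 0$, and set $S = \{\, t \geq 0 : \gamma(t) \in A \,\}$. Convexity of $A$ together with $\gamma(0) = p \in A$ and $\gamma(1) = q \in A$ gives $[0,1] \subseteq S$; compactness of $A$ (closed and bounded) makes $S$ closed and bounded, so $S = [0,t^{*}]$ for some $t^{*} \geq 1$. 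Setting $b = \gamma(t^{*}) \in A$, it is immediate that $q = \gamma(1) \in [p,b]$ since $1 \in [0,t^{*}]$, and it remains only to see $b \in \partial A$: if $b$ were an interior point of $A$, continuity of $\gamma$ would place $\gamma(t^{*}+\delta)$ inside $A$ for some small $\delta > 0$ (using $q \neq p$ so $\gamma$ is non-constant), contradicting the maximality of $t^{*}$. Hence $b \in A \setminus \mathrm{int}(A) = \partial A$, and $q \in [p,b] \subseteq \bigcup_{b' \in \partial A}[p,b']$.

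I do not expect a real obstacle here; the only points needing care are the degenerate case $q = p$ and the appeal to boundedness guaranteeing that the ray actually leaves $A$ (so $t^{*}$ is finite and $b$ is well defined), both covered by the compactness and non-empty-interior hypotheses. As an alternative to the explicit parametrization in the last step, one could instead invoke Lemma~\ref{lem:asf} applied at the point $b$, but the direct argument above is self-contained and keeps the proof short.
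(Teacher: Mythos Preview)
Your proof is correct. The paper itself does not supply a proof of this lemma; it is quoted verbatim from \cite[Lemma~7]{TJP2011} and used as a black box in the construction of Lemma~\ref{lem:eamlc}. Your argument---the trivial inclusion by convexity and closedness, and the reverse inclusion via the ray $\gamma(t)=p+t(q-p)$ with $t^{*}=\max\{t\ge 0:\gamma(t)\in A\}$---is the standard self-contained proof and fills in what the paper leaves to the citation. The only minor point worth making explicit is that $S$ is an \emph{interval} (not merely closed and bounded) because convexity of $A$ forces $\gamma([t_1,t_2])\subseteq A$ whenever $\gamma(t_1),\gamma(t_2)\in A$; you use this implicitly when writing $S=[0,t^{*}]$.
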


\begin{lem}\label{lem:eamlc}
There is an ambient isotropy between $L$ and $C$ with compact support of $\Gamma$, leaving $\partial \Gamma$ fixed.
\end{lem}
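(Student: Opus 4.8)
The plan is to build the ambient isotopy disc-by-disc, exploiting the foliation of the pipe section $\Gamma=\bigcup_{t\in[0,1]}D_r(t)$ by its pairwise disjoint normal discs, and then to interpolate inside each disc along the straight segment joining $C(t)=D_r(t)\cap C$ to $\tilde{L}(t)=D_r(t)\cap L$. For $t\in(0,1)$ both of these points lie in $\textup{int}(D_r(t))$, so each disc carries a well-defined \emph{push} (the ``variant of a push'' of Figure~\ref{fig:cpush}): for every boundary point $b\in\partial D_r(t)$, send the segment $[C(t),b]$ affinely onto $[\tilde{L}(t),b]$. By Lemma~\ref{lem:asf} the ray from an interior point to $b$ meets $\partial D_r(t)$ only at $b$, and by Lemma~\ref{lem:lapb} both fan families exhaust $D_r(t)$; hence this push is a homeomorphism of $D_r(t)$ that fixes $\partial D_r(t)$ pointwise and carries $C(t)$ to $\tilde{L}(t)$.

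To turn this into an isotopy, for $s\in[0,1]$ set $C_s(t)=(1-s)C(t)+s\,\tilde{L}(t)\in\textup{int}(D_r(t))$ and let $\phi_{t,s}$ be the push of $D_r(t)$ sending $C(t)$ to $C_s(t)$ and fixing $\partial D_r(t)$. Define $H:\mathbb{R}^3\times[0,1]\to\mathbb{R}^3$ by $H(x,s)=\phi_{t,s}(x)$ when $x\in D_r(t)$, and $H(x,s)=x$ when $x\notin\Gamma$. This is well defined since the normal discs are disjoint and cover $\Gamma$ (nonsingularity of $\Gamma$), and the two pieces agree on $\partial\Gamma$: on the lateral pipe surface each $\phi_{t,s}$ is the identity, and since $C(0)=L(0)$ and $C(1)=L(1)$ we get $C_s(0)=C(0)$ and $C_s(1)=C(1)$, so $\phi_{0,s}$ and $\phi_{1,s}$ are the identity on the end discs $D_r(0),D_r(1)$. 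Recall that $\partial\Gamma$ is exactly the union of the pipe surface with these two end discs.

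Then the isotopy axioms follow: $H(\cdot,0)$ is the identity because $C_0(t)=C(t)$; for fixed $s$, $H(\cdot,s)$ restricts to a homeomorphism on each disc and to the identity off $\Gamma$, so it is a continuous bijection of $\mathbb{R}^3$ that is the identity outside the compact set $\Gamma$, hence a homeomorphism; $H$ leaves $\partial\Gamma$ fixed for every $s$ and has support contained in the compact set $\Gamma$; and $H(C(t),1)=\phi_{t,1}(C(t))=\tilde{L}(t)=h(C(t))$ by Equation~\ref{eq:hh}, so $H(\cdot,1)$ maps $C$ onto $L$. Thus $H$ is an ambient isotopy with compact support of $\Gamma$ leaving $\partial\Gamma$ fixed.

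The main obstacle is the joint continuity of $H$ in $(x,s)$ across $\Gamma\times[0,1]$. The push $\phi_{t,s}$ depends on three continuously varying ingredients: the disc $D_r(t)$ (its center $C(t)$ and its normal plane, which vary $C^1$ in $t$ since $C$ is $C^2$ and $\Gamma$ is a nonsingular pipe section), the source point $C(t)$, and the target point $C_s(t)$, which is continuous in $(t,s)$ because $\tilde{L}(t)$ is continuous by Lemma~\ref{lem:contl}. Identifying each $D_r(t)$ with a fixed model disc via a continuously chosen orthonormal frame of its normal plane reduces the claim to the elementary continuity of the planar push as a function of its source and target centers, including the behavior as $x$ approaches the moving center and as $s\to 0$ (where $\phi_{t,s}$ degenerates to the identity); here the convex-geometry Lemmas~\ref{lem:asf} and~\ref{lem:lapb} are exactly what guarantees the push is well behaved. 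Gluing over $t$ yields continuity on $\Gamma\times[0,1]$, and agreement with the identity on $\partial\Gamma$ extends it continuously to all of $\mathbb{R}^3\times[0,1]$.
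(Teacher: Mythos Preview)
Your proposal is correct and follows essentially the same construction as the paper: on each normal disc $D_r(t)$ you use the ``push'' $F_{p,(1-s)p+sq}$ sending rays from $p=C(t)$ to rays from $(1-s)p+sq$ with $q=\tilde L(t)$, then glue across the disjoint discs and extend by the identity off $\Gamma$. The only difference is presentational: the paper defers the verification of continuity and the homeomorphism property to \cite[Corollary~4.4.7, 4.4.8]{Lance2009}, whereas you spell out the dependence on $(t,s)$ via Lemma~\ref{lem:contl} and the convex-geometry Lemmas~\ref{lem:asf}--\ref{lem:lapb}, and you make explicit why the end discs $D_r(0),D_r(1)$ are fixed (since $C(0)=L(0)$ and $C(1)=L(1)$).
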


\begin{proof}
We consider each normal disc $D_r(t)$ for $t\in[0,1]$. Let $p=D_r(t) \cap C$ and $q=h(p)$ with $h$ defined by Equation~\ref{eq:hh}, then define a map $F_{p,q}: D_r(t) \rightarrow D_r(t)$ such that it sends each line segment $[p,b]$ for $b \in \partial D_r(t)$, linearly onto the line segment $[q,b]$ as Figure~\ref{fig:cpush} shows. The two previous lemmas (Lemma~\ref{lem:asf} and Lemma~\ref{lem:lapb}), will yield that $F_{p,q}$ is a homeomorphism, leaving $\partial D_r(t)$ fixed \cite[Lemma 4.4.6]{Lance2009}. 

In order to extend $F_{p,q}$ to an ambient isotopy, define $H: D_r(t) \times [0,1] \rightarrow D_r(t)$  \cite[Corollary 4.4.7]{Lance2009} by
\[
H(v,s)=\left\{
\begin{array}{l l}
  (1-s)p+sq  & \quad \text{if $v=p$}\\
  F_{p,(1-s)p+sq}(v)  &   \quad \text{if $v \neq p$},\\
\end{array} \right.
\]
where $F_{p,(1-s)p+sq}$ is a map on $D_r(t)$ analogous to $F_{p,q}$, sending each line segment $[p,b]$ for $b \in \partial D_r(t)$, linearly onto the line segment $[(1-s)p+sq,b]$. 


It is a routine \cite[Corollary 4.4.7]{Lance2009} to verify that $H(v,s)$ is well defined on the compact set $D_r(t)$, continuous, one-to-one and onto, leaving $\partial D_r(t)$ fixed. Now, we naturally define an ambient isotopy $T_t: \mathbb{R}^2 \times [0,1] \rightarrow \mathbb{R}^2$ on the plane containing $D_r(t)$ by
\[
T_t(v,s)=\left\{
\begin{array}{l l}
  H(v,s)  & \quad \text{if $v \in D_r(t)$}\\
  v  &   \quad \text{otherwise}.\\
\end{array} \right.
\]

We then define $T: \mathbb{R}^3 \times [0,1] \rightarrow \mathbb{R}^3$  by 
\[
T(v,s)=\left\{
\begin{array}{l l}
  T_t(v,s)  & \quad \text{if $v\in D_r(t)$}\\
  v  &   \quad \text{otherwise}.\\
\end{array} \right.
\]
The fact that the normal discs $D_r(t)$ are disjoint ensures that $T$ is an ambient isotopy \cite[Corollary 4.4.8]{Lance2009}, with compact support of $\Gamma$, leaving $\partial \Gamma$ fixed. 


\end{proof}

\section{Ambient Isotopy for B\'ezier Curves}
Now we apply this result to a simple, regular, composite, $C^1$ B\'ezier curve $\mathcal{B}$ and the control polygon $\mathcal{P}$.

\subsection{Ambient isotopy}
There exists a nonsingular pipe surface~\cite{Maekawa_Patrikalakis_Sakkalis_Yu1998} of radius $r$ for $\mathcal{B}$, denoted as  $S_{\mathcal{B}}(r)$. Denote the nonsingular pipe section determined by $S_{\mathcal{B}}(r)$ as $\Gamma_{\mathcal{B}}$. Also, for each sub-control polygon of $\mathcal{B}$, there exists a corresponding nonsingular pipe sections. Denote the nonsingular pipe section corresponding to the $k$th control polygon as $\Gamma_k$. 

\begin{theo}\label{thm:eampb}
Each sub-control polygon $P^k$ of a B\'ezier curve $\mathcal{B}$ will eventually satisfy \textit{Conditions 1 \textup{and} 2} via subdivision, and consequently, there will be an ambient isotropy between $\mathcal{B}$ and $\mathcal{P}$ with compact support of $\Gamma_{\mathcal{B}}$, leaving $\partial \Gamma_{\mathcal{B}}$ fixed.
\end{theo}

\begin{proof}
By Lemma~\ref{lem:hp12}, \textit{Conditions 1 \textup{and} 2} can be achieved by subdivisions. Now consider each sub-control polygon $P^k$ satisfying \textit{Conditions 1 \textup{and} 2}, and the corresponding B\'ezier sub-curves $B^k$. Use Lemma~\ref{lem:eamlc} to define an ambient isotopy $\Psi_k: \mathbb{R}^3 \times [0,1] \rightarrow \mathbb{R}^3$ between $B^k$ and $P^k$, for each $k \in \{ 1,2,\ldots,2^i\}$. Define $\Psi:  \mathbb{R}^3 \times [0,1] \rightarrow \mathbb{R}^3$ by the composition
$$\Psi=\Psi_1 \circ \Psi_2 \circ \ldots \circ \Psi_{2^i}. $$
Note that $\Psi_k$ fixes the complement of $\textup{int}(\Gamma_k)$, and $\textup{int}(\Gamma_k) \cap \textup{int}(\Gamma_{k'}) = \emptyset $ for all $k \neq k'$. So the composition $\Psi$ is well defined. Since each $\Psi_k$ is an ambient isotopy, the composition $\Psi$ is an ambient isotopy between $\mathcal{B}$ and $\mathcal{P}$ with compact support of $\Gamma_{\mathcal{B}}$, leaving $\partial \Gamma_{\mathcal{B}}$ fixed.  
\end{proof}
\subsection{Sufficient subdivision iterations}\label{ssec:nchns}
Now we consider sufficient numbers of subdivision iterations to achieve the ambient isotopy defined by Theorem~\ref{thm:eampb}, that is, we shall have a control polygon that satisfies \textit{Conditions 1 \textup{and} 2}. The number of subdivisions for \textit{Condition 1} is given in the paper \cite[Lemma 6.3]{bez-iso}. To obtain the number of subdivisions for $\textit{Condition 2}$, we consider the following, for which we let $\mathcal{P'}(t)=l'(P,i)(t)$ (the first derivative of the control polygon $\mathcal{P}$), and denote the angle between $\mathcal{B'}(t)$ and $\mathcal{P'}(t)$ as $\theta(t)$, for $t\in[0,1]$. 

\begin{lem}\textup{\cite[Theorem 6.1]{bez-iso}}\label{lem:tmt1}
For any $0< \nu < \frac{\pi}{2}$, there is an integer $N(\nu)$ such that each exterior angle of $\mathcal{P}$ will be less than $\nu$ after $N(\nu)$ subdivisions, where
\begin{align}\label{eq:N} N(\nu)=\lceil \max\{N_1,\log(f(\nu))\} \rceil, \end{align}
$$N_1= \frac{1}{2} \log(\frac{N_{\infty}(n-1)\parallel \triangle_2P' \parallel}{\sigma}),$$ and
$$ f(\nu)=\frac{2M}{(1-\cos(\nu))(\sigma-B'_{dist}(N_1))}. $$
\end{lem}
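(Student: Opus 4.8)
The plan is to deduce the bound from the quantitative convergence of de~Casteljau subdivision applied to the \emph{derivative} (hodograph) curve $\mathcal B'$, combined with the regularity estimate $\sigma=\min_{t\in[0,1]}\|\mathcal B'(t)\|>0$. The starting observation is a reduction: the exterior angle of $\mathcal P$ at an interior vertex $v_i$ is $\eta(v_i-v_{i-1},\,v_{i+1}-v_i)$, i.e. the angle subtended at the origin by the two consecutive vertices $W_{i-1}:=v_i-v_{i-1}$ and $W_i:=v_{i+1}-v_i$ of the derivative control polygon $\mathcal P'$ (the factor $n$ relating $W_j$ to the control points of $\mathcal B'$ is irrelevant to the angle). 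Hence it suffices, uniformly in $i$, to bound $\|W_i-W_{i-1}\|$ from above and $\|W_i\|$ from below after a suitable number of subdivisions; the bookkeeping then amounts to combining these two estimates with an elementary angle inequality.

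Next I would establish the lower bound on $\|W_i\|$, which is the role of $N_1$. By the convergence of the derivative control polygon to $\mathcal B'$ under subdivision \cite{Morin_Goldman2001}, together with the quadratic distance estimate for the degree-$(n-1)$ curve $\mathcal B'$ of Nairn--Peters--Lutterkort type used in \cite{bez-iso}, the Hausdorff distance between $\mathcal P'$ and $\mathcal B'$ after $N$ subdivisions is at most $N_\infty(n-1)\,2^{-2N}\|\triangle_2 P'\|$. Taking $N_1=\tfrac12\log\!\big(N_\infty(n-1)\|\triangle_2 P'\|/\sigma\big)$ forces $B'_{dist}(N)\le B'_{dist}(N_1)<\sigma$ for every $N\ge N_1$; since each vertex $W_i$ lies within $B'_{dist}(N)$ of the compact set $\mathcal B'([0,1])$, which is disjoint from the open ball of radius $\sigma$ about the origin, we obtain $\|W_i\|\ge\sigma-B'_{dist}(N)\ge\sigma-B'_{dist}(N_1)>0$ for all $i$ and all $N\ge N_1$.

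For the upper bound I would use two elementary facts: on $[0,\pi]$ the cosine is strictly decreasing, so $\eta(\vec a,\vec b)<\nu$ is equivalent to $1-\cos\eta(\vec a,\vec b)<1-\cos\nu$; and for nonzero $\vec a,\vec b$, writing $\vec b=\vec a+\vec\delta$ and projecting gives $\|\vec b\|\,(1-\cos\eta(\vec a,\vec b))\le 2\|\vec\delta\|$, hence $1-\cos\eta(\vec a,\vec b)\le 2\|\vec a-\vec b\|/\|\vec b\|$. Applied to $\vec a=W_{i-1}$, $\vec b=W_i$, together with the lower bound of the previous paragraph, every exterior angle $\beta$ of $\mathcal P$ satisfies $1-\cos\beta\le 2\|W_i-W_{i-1}\|/(\sigma-B'_{dist}(N_1))$ whenever $N\ge N_1$. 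Now $W_i-W_{i-1}$ is precisely a consecutive first difference of the derivative control polygon, and the standard subdivision estimates give $\|W_i-W_{i-1}\|\le M\,2^{-N}$ after $N$ subdivisions, with $M$ the corresponding constant governing the variation of $\mathcal P'$ (equivalently, a bound coming from $\mathcal B''$ / the first differences of the hodograph control polygon). Substituting, $1-\cos\beta<1-\cos\nu$ as soon as $2^N>2M/\big((1-\cos\nu)(\sigma-B'_{dist}(N_1))\big)=f(\nu)$, i.e. $N>\log f(\nu)$. Therefore $N(\nu)=\lceil\max\{N_1,\log f(\nu)\}\rceil$ subdivisions make every exterior angle of $\mathcal P$ strictly less than $\nu$, which is the claim.

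I expect the main obstacle to be pinning down the two quantitative subdivision estimates with exactly the constants appearing in the statement --- the $N_\infty(n-1)\,2^{-2N}\|\triangle_2 P'\|$ distance bound for $\mathcal B'$ and the $M\,2^{-N}$ bound on consecutive vertex differences of $\mathcal P'$ --- since both must be extracted from the classical de~Casteljau convergence analysis for the degree-$(n-1)$ derivative curve and adapted to the composite, piecewise $C^1$ setting, with attention at the junction points (where a vertex of the hodograph control polygon is repeated). The angle and regularity arguments are short, as is the routine check that the ceiling $\lceil\cdot\rceil$ preserves the strict inequalities (using that $\lceil x\rceil\ge x$, with the boundary case $x\in\mathbb Z$ absorbed by the slack in the distance estimate).
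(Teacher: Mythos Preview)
The paper does not prove this lemma at all: it is quoted verbatim as \cite[Theorem~6.1]{bez-iso} and used as a black box, so there is no ``paper's own proof'' to compare against. Your proposal is therefore not a reproduction but a plausible reconstruction of the argument in the cited source.

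As a reconstruction it is reasonable and follows the shape one would expect from the constants in the statement: reduce exterior angles of $\mathcal P$ to angles between consecutive hodograph control points $W_i$; use the Nairn--Peters--Lutterkort-type bound $B'_{dist}(N)=N_\infty(n-1)\,2^{-2N}\|\triangle_2 P'\|$ on the degree-$(n-1)$ derivative curve to force $\|W_i\|\ge\sigma-B'_{dist}(N_1)$; and use a first-difference bound $\|W_i-W_{i-1}\|\le M\,2^{-N}$ together with an elementary cosine inequality to finish. Two small cautions. First, your angle inequality $\|\vec b\|\bigl(1-\cos\eta(\vec a,\vec b)\bigr)\le 2\|\vec a-\vec b\|$ is stated without proof; the clean identity here is $1-\cos\eta\le \|\vec a-\vec b\|^2/(2\|\vec a\|\|\vec b\|)$, so you should check that whichever inequality \cite{bez-iso} actually uses matches the factor $2M$ and the linear (not quadratic) dependence on $2^{-N}$ in $f(\nu)$. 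Second, with $N_1$ as written one gets $B'_{dist}(N_1)=\sigma$ exactly, so $\sigma-B'_{dist}(N_1)=0$ and $f(\nu)$ is formally undefined; presumably \cite{bez-iso} resolves this by a ceiling, a strict inequality in the definition of $N_1$, or a slightly different normalization of $B'_{dist}$, and you should track that down rather than paper over it with ``slack in the distance estimate.''
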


Note that for a B\'ezier curve of degree $n$, there are $n-1$ exterior angles for each sub-control polygon $P^k$. To have $T_{\kappa}(P^k) <\frac{\pi}{2}$, it suffices to make each exterior angle smaller than $\frac{\pi}{2(n-1)}$. By Lemma~\ref{lem:tmt1}, this can be gained by $N(\frac{\pi}{2(n-1)})$ subdivisions. 

\textit{Condition 2} is motivated by the weaker condition $T_{\kappa}(P^k) <\frac{\pi}{2}$. We couldn't derive the same results by using this weaker condition instead, but our \textit{Condition 2} requires at most one more subdivision, as shown below. 

\begin{lem}\label{lem:c2wf}
\textit{Condition 2} will be fulfilled by at most $N(\frac{\pi}{2(n-1)})+1$ subdivisions.
\end{lem}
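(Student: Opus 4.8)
The plan is to reduce \textit{Condition 2} to the weaker curvature bound $T_{\kappa}(P^k) < \frac{\pi}{2}$ already handled by Lemma~\ref{lem:tmt1}, paying only one extra subdivision to absorb the angular term $\max_{t\in[0,1]}\theta(t)$. First I would recall the two convergence facts cited in the proof of Lemma~\ref{lem:hp12}: under subdivision the total curvature $T_{\kappa}(P^k)$ of every sub-control polygon tends to $0$ (Angular Convergence, \cite[Theorem 4.1]{bez-iso}), and the discrete derivative $\mathcal{P}'(t)$ converges to $\mathcal{B}'(t)$ \cite{Morin_Goldman2001}, so $\theta(t)\to 0$ uniformly in $t$. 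Hence after $N(\frac{\pi}{2(n-1)})$ subdivisions we already have $T_{\kappa}(P^k) < \frac{\pi}{2}$ for every sub-control polygon, and the only thing missing for \textit{Condition 2} is that the \emph{sum} $T_{\kappa}(P^k) + \max_t\theta(t)$ stay below $\frac{\pi}{2}$.

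The key quantitative step is to bound $\max_t\theta(t)$ after one additional subdivision in terms of the exterior angles, which are themselves controlled by Lemma~\ref{lem:tmt1}. The point is that for a B\'ezier sub-curve $B^k$, the derivative $\mathcal{B}'(t)$ is itself a B\'ezier curve whose control points are the edge directions of $P^k$ (scaled by $n$); so the angle $\theta(t)$ between $\mathcal{B}'(t)$ and the constant direction $\mathcal{P}'(t)$ of an edge is, by the convex-hull property of the derivative curve together with Lemma~\ref{lem:bt} (the iterated spherical triangle inequality), at most the sum of the exterior angles of $P^k$, i.e. $\theta(t)\le T_{\kappa}(P^k)$ — and after a further subdivision each of the $n-1$ exterior angles on a sub-control polygon is at most half (in fact smaller, by the quadratic convergence rate in $N_1$) of its previous value, so the combined quantity $T_{\kappa}(P^k)+\max_t\theta(t)$ drops below $\frac{\pi}{2}$. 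I would make this precise by applying Lemma~\ref{lem:tmt1} once more to a bound of the form $\nu = \frac{\pi}{4(n-1)}$, which forces both $T_{\kappa}(P^k) < \frac{\pi}{4}$ and $\max_t\theta(t) < \frac{\pi}{4}$ simultaneously, whence \textit{Condition 2}; and then observe that passing from the threshold $\frac{\pi}{2(n-1)}$ to $\frac{\pi}{4(n-1)}$ costs only one subdivision because halving $\nu$ increases $N(\nu)$ by at most $1$ (the dominant term $N_1$ depends on $\nu$ only through the $\log$, and $\log$ of a bounded ratio shifts by a constant that is absorbed by the ceiling after a single iteration — more carefully, each subdivision at least halves the exterior angles, so one iteration beyond $N(\frac{\pi}{2(n-1)})$ already achieves $\frac{\pi}{4(n-1)}$).

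The main obstacle I anticipate is the last bookkeeping point: showing rigorously that $N(\tfrac{\pi}{4(n-1)}) \le N(\tfrac{\pi}{2(n-1)}) + 1$, rather than merely $N(\tfrac{\pi}{4(n-1)})\le N(\tfrac{\pi}{2(n-1)})+O(1)$. This requires looking inside the formula for $N(\nu)$ in Lemma~\ref{lem:tmt1}: the term $N_1$ is independent of $\nu$ altogether, and the term $\log f(\nu)$ changes only through the factor $\tfrac{1}{1-\cos\nu}$, so halving $\nu$ multiplies $f(\nu)$ by roughly $4$ and hence adds about $\log 4 = 2$ inside the $\log$ — which would naively give two extra subdivisions, not one. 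The honest route is therefore not to re-invoke $N$ at a smaller threshold but to argue directly: after $N(\tfrac{\pi}{2(n-1)})$ subdivisions every exterior angle is $< \tfrac{\pi}{2(n-1)}$, so $T_\kappa(P^k) < \tfrac{\pi}{2}$; \emph{one} further de Casteljau step at least halves every exterior angle (this is the standard contraction estimate, and is implicit in the $\tfrac12\log$ coefficient of $N_1$), giving $T_\kappa(P^k) < \tfrac{\pi}{4}$ and, via $\theta(t)\le T_\kappa(P^k)$ on the relevant sub-polygon, $\max_t\theta(t) < \tfrac{\pi}{4}$; summing yields \textit{Condition 2}. The delicate check is precisely that single-step halving of exterior angles, which I would extract from the convergence analysis underlying Lemma~\ref{lem:tmt1} (equivalently \cite[Theorem 6.1]{bez-iso}), together with the derivative-convex-hull inequality $\theta(t)\le T_\kappa(P^k)$.
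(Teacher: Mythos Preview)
Your proposal has two genuine gaps, both at the places you yourself flag as ``delicate.'' First, the inequality $\theta(t)\le T_{\kappa}(P^k)$ does not follow from the convex-hull property plus Lemma~\ref{lem:bt}. Lemma~\ref{lem:bt} bounds $\eta(\vec v_1,\vec v_m)$ along a \emph{chain} $\vec v_1,\ldots,\vec v_m$; but $\mathcal{B}'(t)$ is an arbitrary convex combination of the edge vectors, not a member of such a chain, and a convex combination of vectors need not lie in the spherical convex hull of their directions (e.g.\ $\tfrac12(1,0,0)+\tfrac12(-1,0.1,0)=(0,0.05,0)$ is orthogonal to both summands). So the step ``$\theta(t)\le T_{\kappa}(P^k)$'' is unproven. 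Second, the claim that one de~Casteljau step ``at least halves every exterior angle'' is not a consequence of the $\tfrac12\log$ in $N_1$: that coefficient reflects the quartering of the \emph{second-difference norm} $B'_{\mathrm{dist}}(i)=2^{-2i}N_\infty(n-1)\|\Delta_2 P'\|$, which controls $1-\cos(\cdot)$ rather than the angles themselves, so no per-step halving of angles follows. You already noticed that reapplying $N(\nu)$ at $\nu/2$ costs about $\log 4$ iterations, not one; the direct halving argument you substitute is equally unsupported.

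The paper sidesteps both issues by decoupling the two terms rather than comparing them. It splits Condition~2 as $T_{\kappa}(P^k)<\tfrac{\pi}{6}$ and $\max_t\theta(t)<\tfrac{\pi}{3}$, then handles $\theta(t)$ \emph{independently} of $T_\kappa$ via the quantitative derivative-distance bound $1-\cos\theta(t)\le 2B'_{\mathrm{dist}}(i)/\sigma$ from \cite[Section~6.3]{bez-iso}; solving this for $i$ gives $i\ge N_1+1$, which is at most $N(\nu)+1$ for every admissible $\nu$. For the curvature term it invokes $N(\tfrac{\pi}{6(n-1)})$ and reads off from the explicit formula~\eqref{eq:N} that this is at most $N(\tfrac{\pi}{2(n-1)})+1$. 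Thus the ``$+1$'' comes from the explicit formulas for $N(\nu)$ and $B'_{\mathrm{dist}}(i)$, not from any angle-halving or $\theta\le T_\kappa$ comparison. If you want to repair your route, you would need either a genuine proof that $\theta(t)\le T_\kappa(P^k)$ under Condition~2 (which, if true, is nontrivial and not in the cited literature), or to fall back on the separate $B'_{\mathrm{dist}}$ bound for $\theta$ as the paper does.
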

\begin{proof}
To prove $$T_{\kappa}(P^k) + \max_{t\in[0,1]} \theta(t) <\frac{\pi}{2},$$
it suffices to prove  
$$T_{\kappa}(P^k) < \frac{\pi}{6}\ \text{and}\ \max_{t\in[0,1]} \theta(t) <\frac{\pi}{3}.$$
To have $T_{\kappa}(P^k) < \frac{\pi}{6}$, by Lemma~\ref{lem:tmt1}, $N(\frac{\pi}{6(n-1)})$ subdivisions will be sufficient. The definition given by Equation~\ref{eq:N} implies that 
$$N(\frac{\pi}{6(n-1)}) \leq N(\frac{\pi}{2(n-1)}) +1.$$
On the other hand, by \cite[Section 6.3]{bez-iso}, for all $t\in [0,1]$, we have 
$$1- \cos (\theta(t))  \leq \frac{2B'_{dist}(i)}{\sigma},$$
where 
$$B'_{dist}(i) := \frac{1}{2^{2i}}N_{\infty}(n-1) || \Delta_2P' ||.$$
So to have $\max_{t\in[0,1]} \theta(t) <\frac{\pi}{3}$, it suffices to set
$$\frac{2B'_{dist}(i)}{\sigma} < \frac{\pi}{3},$$
which implies 
$$ i \geq \frac{1}{2}\log (\frac{N_{\infty}(n-1) ||\Delta_2 P'||}{\sigma})+1.$$
Comparing it with Equation~\ref{eq:N} , we find that it is at most one more than $N(\nu)$ for any $0< \nu < \frac{\pi}{2}$. The conclusion follows. 
\end{proof}

Let 
\begin{align}\label{eq:nsta}  N^{\star}=\max\{N(\frac{\pi}{2(n-1)})+1 , N'(r)\}, \end{align}
where $r$ is the radius of $S_r(\mathcal{B})$. 

\begin{theo}\label{thm:pnm}
Performing $N^{\star}$ or more subdivisions, where $N^{\star}$ is given by Equation~\ref{eq:nsta}, will produce an ambient isotopic $\mathcal{P}$ for $\mathcal{B}$.
\end{theo}

\begin{proof}
According to \cite[Lemma 6.3]{bez-iso}, \textit{Condition 1} is satisfied after $N'(r)$ subdivisions. By Lemma~\ref{lem:c2wf}, \textit{Condition 2} is satisfied after $N(\frac{\pi}{2(n+1)})+1$ subdivisions. Then Theorem~\ref{thm:eampb} can be applied to draw the conclusion. 
\end{proof}

Now we compare this result with the existing one \cite{bez-iso}.

\begin{rem}
\label{re:better}
To obtain ambient isotopy, the previously established result \textup{\cite{bez-iso}} needs $\max\{N(\frac{\pi}{2(n-1)}) , N'(r)\} +2$ subdivision iterations \textup{\cite[Remark 6.1]{bez-iso}}. In contrast, Theorem~\ref{thm:pnm} implies $\max\{N(\frac{\pi}{2(n-1)}) , N'(r)\} +1$ will be sufficient. A subdivision doubles the number of line segments. Therefore, with only one less subdivision, the work here produces much less line segments, which may be useful especially for applications with very complex shapes.  
\end{rem}
\section{Conclusions}

We give two conditions regarding distance, and total curvature combined with derivative, to guarantee the same knot type. It can be directly applied to B\'ezier curves. This work is alternative to an existence result of requiring the containment of convex hulls of sub-control polygons, and another result using conditions of distance and derivative. The approach here allows fewer subdivision iterations and less line segments by explicitly constructing ambient isotopies. Moreover, we showed that it is possible to verify the condition of total curvature only, other than total curvature combined with derivative, with a price of one additional subdivision. Testing the global property of total curvature may be easier than testing the local property of derivative in some practical situations. It may find applications in computer graphics, computer animation and scientific visualization. 

\bibliographystyle{plain}
\bibliography{ji-tjp-biblio}

\begin{thebibliography}{10}

\bibitem{Amenta2003}
N.~Amenta, T.~J. Peters, and A.~C. Russell.
\newblock Computational topology: Ambient isotopic approximation of
  2-manifolds.
\newblock {\em Theoretical Computer Science}, 305:3--15, 2003.

\bibitem{L.-E.Andersson2000}
L.~E. Andersson, S.~M. Dorney, T.~J. Peters, and N.~F. Stewart.
\newblock Polyhedral perturbations that preserve topological form.
\newblock {\em CAGD}, 12(8):785--799, 2000.

\bibitem{Burr2012}
M.~Burr, S.~W. Choi, B.~Galehouse, and C.~K. Yap.
\newblock Complete subdivision algorithms, {II}: Isotopic meshing of singular
  algebraic curves.
\newblock {\em Journal of Symbolic Computation}, 47:131--152, 2012.

\bibitem{Chazal2005}
F.~Chazal and D.~Cohen-Steiner.
\newblock A condition for isotopic approximation.
\newblock {\em Graphical Models}, 67(5):390--404, 2005.

\bibitem{ChoMaekawa1996}
W.~Cho, T.~Maekawa, and N.~M. Patrikalakis.
\newblock Topologically reliable approximation in terms of homeomorphism of
  composite {B}\'ezier curves.
\newblock {\em Computer Aided Geometric Design}, 13:497--520, 1996.

\bibitem{DenneSullivan2008}
E.~Denne and J.~M. Sullivan.
\newblock Convergence and isotopy type for graphs of finite total curvature.
\newblock In A.~I. Bobenko, J.~M. Sullivan, P.~Schr{\"o}der, and G.~M. Ziegler,
  editors, {\em Discrete Differential Geometry}, pages 163--174. Birkh{\"a}user
  Basel, 2008.

\bibitem{G.Farin1990}
G.~Farin.
\newblock {\em Curves and Surfaces for Computer Aided Geometric Design}.
\newblock Academic Press, San Diego, CA, 1990.

\bibitem{Hirsch}
M.~W. Hirsch.
\newblock {\em Differential Topology}.
\newblock Springer, New York, 1976.

\bibitem{TJP2011}
K.~E. Jordan, L.~E. Miller, T.~J. Peters, and A.~C. Russell.
\newblock Geometric topology and visualizing 1-manifolds.
\newblock In V.~Pascucci, X.~Tricoche, H.~Hagen, and J.~Tierny, editors, {\em
  Topological Methods in Data Analysis and Visualization}, pages 1 -- 13.
  Springer NY, 2011.

\bibitem{JiLi}
J.~Li.
\newblock {\em Topological and Isotopic Equivalence with Applications to
  Visualization}.
\newblock PhD thesis, University of Connecticut, U.S., 2013.

\bibitem{JL-isoconvthm}
J.~Li and T.~J. Peters.
\newblock Isotopic convergence theorem.
\newblock {\em Journal of Knot Theory and Its Ramifications}, 22(3), 2013.

\bibitem{JL2012}
J.~Li, T.~J. Peters, D.~Marsh, and K.~E. Jordan.
\newblock Computational topology counterexamples with 3{D} visualization of
  {B}\'ezier curves.
\newblock {\em Applied General Topology}, 2012.

\bibitem{bez-iso}
J.~Li, T.~J. Peters, and J.~A. Roulier.
\newblock Isotopy from {B}\'ezier curve subdivision.
\newblock {\em Preprint}, 2013.

\bibitem{LineYap2011}
L.~Lin and C.~Yap.
\newblock Adaptive isotopic approximation of nonsingular curves: the
  parameterizability and nonlocal isotopy approach.
\newblock {\em Discrete \& Computational Geometry}, 45 (4):760--795, 2011.

\bibitem{Maekawa_Patrikalakis_Sakkalis_Yu1998}
T.~Maekawa, N.~M. Patrikalakis, T.~Sakkalis, and G.~Yu.
\newblock Analysis and applications of pipe surfaces.
\newblock {\em CAGD}, 15(5):437--458, 1998.

\bibitem{TJPweb}
D.~D. Marsh and T.~J. Peters.
\newblock Knot and bezier curve visualizing tool.
\newblock \url{http://www.cse.uconn.edu/~tpeters/top-viz.html}.

\bibitem{Lance2009}
L.~E. Miller.
\newblock {\em Discrepancy and Isotopy for Manifold Approximations}.
\newblock PhD thesis, University of Connecticut, U.S., 2009.

\bibitem{Milnor1950}
J.~W. Milnor.
\newblock On the total curvature of knots.
\newblock {\em Annals of Mathematics}, 52:248--257, 1950.

\bibitem{Monge}
G.~Monge.
\newblock {\em Application de l{'}analyse \`a la g\'eom\'etrie}.
\newblock Bachelier, Paris, 1850.

\bibitem{Moore_Peters_Roulier2007}
E.~L.~F. Moore, T.~J. Peters, and J.~A. Roulier.
\newblock Preserving computational topology by subdivision of quadratic and
  cubic {B}\'ezier curves.
\newblock {\em Computing}, 79(2-4):317--323, 2007.

\bibitem{Morin_Goldman2001}
G.~Morin and R.~Goldman.
\newblock On the smooth convergence of subdivision and degree elevation for
  {B}\'ezier curves.
\newblock {\em CAGD}, 18:657--666, 2001.

\bibitem{J.Munkres1999}
J.~Munkres.
\newblock {\em Topology}.
\newblock Prentice Hall, 2nd edition, 1999.

\bibitem{Nairn-Peters-Lutterkort1999}
D.~Nairn, J.~Peters, and D.~Lutterkort.
\newblock Sharp, quantitative bounds on the distance between a polynomial piece
  and its {B}\'ezier control polygon.
\newblock {\em CAGD}, 16:613--631, 1999.

\bibitem{GeoTop2005}
M.~Reid and B.~Szendroi.
\newblock {\em Geometry and Topology}.
\newblock Cambridge University Press, 2005.

\end{thebibliography}

\end{document}